\documentclass[12pt,draftclsnofoot,letterpaper,onecolumn,romanappendices]{IEEEtran}
\usepackage[dvips]{graphicx}
\usepackage{cite}
\usepackage{epsfig}
\usepackage{amsthm}
\usepackage{amsmath,amssymb,amsfonts,amstext,amsbsy,amsopn,dsfont}
\usepackage{cases}
\usepackage{sublabel}
\usepackage{array}

\oddsidemargin -0.1in
\evensidemargin -0.1in
\textwidth 6.85in

\newtheorem{theorem}{\textbf{Theorem}}
\newtheorem{proposition}{\textbf{Proposition}}
\newtheorem{lemma}{\textbf{Lemma}}
\newtheorem{remark}{\textbf{Remark}}
\newtheorem{definition}{\textbf{Definition}}

\newcommand{\defn}{\triangleq}
\newcommand{\dif}{\textmd{d}}
\newcommand{\ie}{i.e., }

\begin{document}

% This is the first version of the revised MTC paper,updated on 09/27/10 by C.-H. Liu.

\title{Multicast Outage Probability and Transmission Capacity of Multihop Wireless Networks}

\author{Chun-Hung Liu and Jeffrey G. Andrews
\thanks{C.-H. Liu and J. G. Andrews are with the Department of Electrical and Computer Engineering, the University of Texas at Austin, Austin TX 78712-0204, USA. The contact author is J. G. Andrews (Email: jandrews@ece.utexas.edu). Manuscript date: \today.}}

\maketitle

\begin{abstract}
Multicast transmission, wherein the same packet must be delivered to multiple receivers, is an important aspect of sensor and tactical networks and has several distinctive traits as opposed to more commonly studied unicast networks.  Specially, these include (i) identical packets must be delivered successfully to several nodes, (ii) outage at any receiver requires the packet to be retransmitted at least to that receiver, and (iii) the multicast rate is dominated by the receiver with the weakest link in order to minimize outage and retransmission. A first contribution of this paper is the development of a tractable multicast model and throughput metric that captures each of these key traits in a multicast wireless network. We utilize a Poisson cluster process (PCP) consisting of a distinct Poisson point process (PPP) for the transmitters and receivers, and then define the multicast transmission capacity (MTC) as the maximum achievable multicast rate per transmission attempt times the maximum intensity of multicast clusters under decoding delay and multicast outage constraints. A multicast cluster is a contiguous area over which a packet is multicasted, and  to reduce outage it can be tessellated into $v$ smaller regions of multicast. The second contribution of the paper is the analysis of several key aspects of this model, for which we develop the following main result. Assuming $\tau/v$ transmission attempts are allowed for each tessellated region in a multicast cluster, we show that the MTC is $\Theta\left(\rho k^{x}\log(k)v^{y}\right)$ where $\rho$, $x$ and $y$ are functions of $\tau$ and $v$ depending on the network size and intensity, and $k$ is the average number of the intended receivers in a cluster.  We derive $\{\rho, x, y\}$ for a number of regimes of interest, and also show that an appropriate number of retransmissions can significantly enhance the MTC.
\end{abstract}

\begin{keywords}
Multicast Transmission, Multicast Outage, Network Capacity, Information Theory, Stochastic Geometry.
\end{keywords}

\section{Introduction}

Multicast refers to the scenario whereby a transmitter needs to send a packet to multiple receivers.  In a wireless network, this creates a two-edged sword. On one hand, the broadcast nature of wireless transmission assists multicast; but roughly uncorrelated outage probabilities at each receiver (due to spatially distinct fading and interference) require retransmissions that cause interference and waste.  Multicast is an important aspect of sensor and tactical networks, and increasingly in commercial networks where streaming is supported.  However, the literature on multicast is minuscule compared to unicast -- whereby nodes are paired into sources and destinations -- and even basic modeling issues such as outage, capacity/throughput definitions, and retransmissions are not widely agreed upon. In this work, we attempt to investigate the fundamental throughput limits of multicast transmission and we develop a metric based on spatial outage capacity which we term \emph{multicast transmission capacity} (MTC).

In order to characterize the MTC in a wireless network we propose a multicast network model in which each transmitter has an intended multicast region (called a cluster) where all the intended receivers are uniformly and independently scattered, and hence a Poisson cluster process can be reasonably used to model the transmit-receiver location statistics. The active transmitters are modeled as a stationary Poisson point process (PPP) and their associated receiver nodes in the cluster are also a stationary PPP, as shown in Fig. \ref{Fig:MulticastModel}. In other words, each cluster is randomly located in the network and comprises a multicast session.  This paper will develop interference and outage expressions for this model, and analyze some important cases of the network model and design space, including the network intensity, size and the effect of retransmissions.

\subsection{Motivation and Related Work}

The majority of the existing works on network capacity are focused on the \emph{unicast} scenario and built upon the protocol and physical network models proposed in \cite{PGPRK00}. Generally speaking, the unicast network capacity is the maximum number of point-to-point communication links that can be simultaneously supported in the network under some transmission constraints. For example, transport capacity built upon the protocol model \cite{PGPRK00} has a geometric constraint on transmitter-receiver pairs. Due to the difficulty in coordinating the transmission constraints between multiple receivers, the unicast capacity, in general, is not readily extended to the multicast capacity. Even the definition of multicast capacity is not widely agreed upon.

Some previous works, such as \cite{SSXLRS07,UNPGDS09,XYL09,AKVRRR06,BTAVIL06,PCSS05,PJGR05,AKHVRRR06,RZHENG06,AKHRR07}, have made significant progress in studying the multicast or broadcast capacity\footnote{Broadcast means only one transmitter would like to transmit to all of other nodes in the network, whereas multicast means transmitters transmit to a certain number of nodes in the network. Hence, broadcast capacity can be viewed as a special case of multicast capacity.}. For example, in \cite{SSXLRS07} the protocol model is used where source nodes and their multicast destinations are randomly chosen. The multicast capacity is defined as the sum rate of all multicast flows and it is obtained as a function of the number of multicast sources. In \cite{XYL09}, the multicast capacity under the protocol model is defined as the transmission rate summed over all of the multicast traffic flows in the network. Its scaling characterization is obtained by the number of receivers in each multicast session. Reference \cite{AKHVRRR06} showed that the broadcast capacity under the protocol model does not change by more than a constant factor when the number of nodes, the radio range or the area of the network is changed. In \cite{RZHENG06}, the physical model and a stationary PPP of the nodes in the network are considered. It showed that the broadcast capacity is a constant factor of the computed upper bound when the number of nodes goes to infinity under a constant node intensity. The multicast and broadcast capacities in the above works are defined behind the main concept that the transmitted information should be received by all of its intended receivers. However, they are not investigated from the multi-receiver outage point of view and thus their scaling results cannot provide us retransmission guidelines for capacity enhancement.

The multicast capacity problem in this paper is studied from an outage perspective, which is a departure from previous work. For example, the prior work built on the protocol model in \cite{PGPRK00} does not consider channel impairments such as fading and path loss, and thus, all transmissions are successful once certain geometric constraints are satisfied. Also, outage and the resulting retransmissions are not considered. This is somewhat unrealistic. The main issues we would like to clarify are (i) how many simultaneous multicast sessions can (and should) coexist when all receivers in a multicast session need to receive the packet from their transmitter, and (ii) when are retransmissions beneficial or detrimental to the multicast capacity? Hence, we introduce the MTC in a planar network, which is defined as the maximum achievable multicast rate per transmission attempt times the maximum number of the coexisting clusters in the network per unit area, subject to decoding delay and multicast outage constraints. This is a logical evolution of the transmission capacity framework originated in \cite{SWXYJGAGDV05} to the topic of multicast. The decoding delay constraint here means a transmitter can multicast a packet to all of its intended receivers up to $\tau\in\mathbb{N}_+$ transmission attempts and multicast outage happens when any of the intended receivers in a cluster does not receive the information multicasted by their transmitter during those $\tau$ attempts.

\subsection{Main Contributions}
Since unicast outage cannot be directly applied to the outage scenario of multiple receivers, our first contribution is introducing MTC with multicast outage. To the best of our knowledge, this is the first work to study multicast capacity under a multicast outage constraint. According to the proposed Poisson cluster model for multicast, we also propose some new cluster-based definitions for the largeness and denseness of a network in order to characterize the scaling behaviors of the MTC under different conditions. Referring to Table \ref{Tab:MainVars} for the notation of main variables, we have shown that the scaling of both single-hop and multihop MTC can be expressed in a general form of $\Theta(\rho k^{x}\log(k)v^{y})$ where $\rho$, $x$ and $y$ are given in Table \ref{Tab:MainResultsMTC} for different network conditions, $k$ is the average number of the intended receivers in a cluster, and $v$ is the number of the tessellated regions of equal area in a cluster\footnote{Throughout this paper, standard asymptotic notation will be used: $O(\cdot)$, $\Omega(\cdot)$ and $\Theta(\cdot)$ correspond to asymptotic upper, lower, and tight bounds, respectively.}. From Table \ref{Tab:MainResultsMTC}, we know retransmissions have a significant effect on the MTC and certain number of retransmissions could enhance it. We found that the decoding delay $\tau$ can be viewed as a resource which should be allocated properly in every tesselated region to maximize the MTC. In addition, we also show that the MTC scaling holds for the various parameters of Nakagami fading.

We characterize three approaches to improving the MTC -- interference-suppression, interference-avoidance and area-shrinking (i.e. reducing the cluster size). The area-shrinking method is able to provide the best capacity gain among the three since the multicast outage probability is reduced due to fewer receivers in a cluster. However, shrinking the area of a cluster means some of the intended receivers may have to be excluded, which is not welcome. Thus the multihop multicast method is proposed to improve the MTC without shrinking the cluster. Our main result shows that if the clusters are appropriately tessellated multihop multicast can significantly improve the MTC compared to its single-hop counterpart. This is because an appropriate number of retransmissions largely reduces the multicast outage probability and thus results in the increase of the maximum contention intensity that compensates the loss of spectral efficiency due to retransmissions.

\subsection{Paper Organization}
In Section \ref{Sec:MTCmodelPrelims}, the network model and its assumptions are described and some preliminaries for the following analysis are provided here. The main results of multicast  transmission capacity for the case of single-hop multicast are presented in Section \ref{Sec:MTCwSinglehop}, whereas Section \ref{Sec:MTCwMultihop} has the main results of multicast transmission capacity with multihop multicast. Finally, we conclude our work in Section \ref{Sec:Conclusion}.

%%%%%%%%%%%%%%%%%%%%%%%%%%%%%%%%%%%%%%%%%%%%

\begin{table}[!h]
  \centering
  \caption{Notation of Main Variables, Processes and Functions}\label{Tab:MainVars}
  \begin{tabular}{|c|c|}
  \hline
  Symbol & Definition\\ \hline
  $\Phi^{\textsf{t}}\,(\Phi^{\textsf{r}},\,\Phi^{\textsf{c}})$  &  PPP of transmitters (receivers, connected receivers)\\
  $\epsilon$ & upper bound of multicast outage probability ($\ll 1$)\\
  $\lambda_{\textsf{t}}\,(\lambda_{\textsf{r}},\,\lambda_{\textsf{c}})$ & intensity of $\Phi^{\textsf{t}}\,(\Phi^{\textsf{r}},\,\Phi^{\textsf{c}})$\\
  $\bar{\lambda}_{\textsf{t}}$ & maximum contention intensity of $\lambda_{\textsf{t}}$ \\
  $\bar{\lambda}_{\epsilon}$ & first-order Taylor expansion of $\bar{\lambda}_{\textsf{t}}$ ($\bar{\lambda}_{\epsilon}\approx \bar{\lambda}_{\textsf{t}}$ for small $\epsilon$)\\
  $\mathcal{E}(\lambda_{\textsf{t}})$ & multicast outage event depending on $\lambda_{\textsf{t}}$\\
  $\tau$ & decoding delay constraint, positive integer \\
  $v$ & number of the tessellated regions in a cluster ($v\leq \tau$)\\
  $s$ & radius of a multicast cluster ($s>1$)\\
  $\mu(\mathcal{A})$ & Lebesgue measure of a bounded set $\mathcal{A}$\\
  $\beta$ & SIR threshold \\
  $\alpha$ & path loss exponent ($\alpha>2$)\\
  $b$ & multicast transmission rate (bps/Hz)\\
  $C_{\epsilon}$ & multicast transmission capacity (bps/Hz/$m^2$)\\
  $k$ &  average number of the intended receivers in a cluster, $\pi s^2\lambda_{\textsf{r}}$\\
  $f_H(\cdot), F_H(\cdot), F^{\texttt{c}}_H(\cdot)$ & PDF, CDF, CCDF of channel fading gain $H$\\
   \hline
  \end{tabular}
\end{table}

%%%%%%%%%%%%%%%%%%%%%%%%%%%%%%%%%%%%%%%%%%%%%%%%%

\begin{table}[!h]
\centering
\caption{Summary of Main Results on Multicast Transmission Capacity}\label{Tab:MainResultsMTC}
\begin{tabular}{|c|c|c|c|}
\hline
\bf Multicast Transmission Capacity $C_{\epsilon}$& \multicolumn{3}{c|} {$\Theta\left(\rho\,k^{x}\log(k)\,v^{y}\right)$}\\
\hline
\bf Maximum Contention Intensity $\bar{\lambda}_{\epsilon}$ &  \multicolumn{3}{c|} {$\Theta(\rho\,k^{x}\,v^y)$} \\
\hline
\bf Network Condition & Dense & Large  & Large Dense \\
\hline
\bf $x$ & $-\frac{v}{\tau}$ & $-\left(1+\frac{v}{\tau}\right)$ & $-\left(\frac{\tau+2v}{2\tau}\right)$\\
\hline
$\rho$ &  \multicolumn{3}{c|} {$\frac{1}{\tau^2}(\epsilon(\tau/v+1))^{\frac{v}{\tau}}$} \\
\hline
$y$ & \multicolumn{3}{c|} {$\frac{v}{\tau}+1$}\\
\hline
\end{tabular}
\end{table}

\section{Network Model and Preliminaries}\label{Sec:MTCmodelPrelims}

In most of the existing literature, multicast capacity is defined based on the sum of the supportable rates of all multicast sessions. Similarly, the MTC here is characterized by the sum of the multicast rates of the multicast sessions not in (multicast) outage. The multicast rate is affected by the number of the receivers as well as the interference from other multicast sessions. The multicast network model is developed with the principle that it should capture the key traits of a multicast network (for example spatial reuse, retransmissions, broadcast packets, etc.), while being as tractable as possible.

\subsection{Clustered Network Model for Multicast Transmission}\label{Sec:MultiTranModelRxConnProc}
In the network, each transmitter has a multicast cluster of equal area and its receive nodes in the cluster suffer aggregate interference from a Poisson field of transmitters. Specifically, we assume that the network is operating a slotted ALOHA protocol and the distribution of the transmitting nodes in the network is a stationary Poisson point process (PPP) $\Phi^\textsf{t}$ of intensity $\lambda_{\textsf{t}}$. As shown in Fig. \ref{Fig:MulticastModel}, any transmitter $X_i\in\Phi^{\textsf{t}}$ has its own intended multicast cluster $\mathcal{R}_i$ where all of its intended receivers are uniformly and independently distributed and they also form a stationary PPP $\Phi^{\textsf{r}}_i$ of intensity $\lambda_{\textsf{r}}$. \emph{Note that each cluster could contain other transmitters and unintended receivers in addition to its own transmitter and intended receivers.}

Accordingly, the multicast transmission sessions in the network follow a the Poisson cluster process (PCP) $\mathcal{Z}_i\defn \Phi^{\textsf{r}}_i\cup X_i$, \ie each transmitter is a \emph{parent} node associated with a cluster of receive \emph{daughter} nodes. The cluster processes $\{\mathcal{Z}_{i}\}$ corresponding to different transmitters $\{X_i\}$ are assumed to be independent so that the superposition of all clusters yields the resulting cluster process $\Phi=\bigcup_{X_i\in\Phi^{\textsf{t}}}\mathcal{Z}_i$ of intensity $\lambda=k\,\lambda_{\textsf{t}}$ where $k=\pi s^2\lambda_{\textsf{r}}$ is the average number of the intended receivers in each cluster assuming all $\{\mathcal{R}_i, \forall i\in \mathbb{N}\}$ have the same radius $s>1$. The distribution of the intended receiver nodes in each cluster is modeled as a \emph{marked} PPP denoted by $\Phi^{\textsf{r}}_i \defn \{(Y_{ij},H_{ij}): Y_{ij}\in \mathcal{R}_i,j\in\mathbb{N}\}$, where $H_{ij}$ is the fading channel gain between transmitter $X_i$ and its intended receiver $Y_{ij}$. Similarly, the distribution of transmitters in the network is also a marked PPP, \ie $\Phi^{\textsf{t}} \defn \{(X_i,\{\tilde{H}_{ij}\}), i,j\in \mathbb{N}_+\}$ where $\tilde{H}_{ij}$ denotes the fading channel gain between transmitter $X_i$ and the receiver $Y_{0j}$ located in cluster $\mathcal{R}_0$.\footnote{Since all of the following analysis is based on the nodes in the reference cluster $\mathcal{R}_0$, the subscript 0 of some variables will not be explicitly indicated if there is no ambiguity. So  $Y_j$ and $H_j$ in $\mathcal{R}_0$ actually stand for $Y_{0j}$ and $H_{0j}$, respectively.} All fading channel gains are i.i.d. with  probability density function (PDF) $f_{H}(\cdot)$.

Without loss of generality, the MTC can be evaluated in the reference cluster $\mathcal{R}_0$ whose transmitter $X_0$ is located at the origin. We condition on this typical transmitter $X_0$ resulting in what is known the Palm distribution for transmitting nodes in the two-dimensional Euclidean space \cite{DSWKJM96}. It follows by Slivnyak's theorem \cite{DSWKJM96} that this conditional distribution also corresponds to a homogenous PPP with the same intensity and an additional point at the origin. The signal propagation in space is assumed to undergo path loss and fading. The path loss model between two nodes $X$ and $Y$ used in this paper is
\begin{equation}\label{Eqn:PathLossModel}
\ell(|X-Y|) \defn\begin{cases}|X-Y|^{-\alpha},&\quad \text{if}\,\, |X-Y|\geq 1 \\ 0, &\quad \text{else}, \end{cases}
\end{equation}
where $|X-Y|$ denotes the Euclidean distance between nodes $X$ and $Y$, and $\alpha>2$ is the path loss exponent\footnote{In a planar network, $\alpha$ is greater than 2 in order to have bounded interference, i.e. $I_j<\infty$ almost surely if $\alpha>2$ \cite{FBBBPM06,MHJGAFBODMF10}}.
The reason of using the model in \eqref{Eqn:PathLossModel} is because the model $|\cdot|^{-\alpha}$ does not behave well in the near field of each transmitter and it thus leads to an unbounded mean of the shot noise process. This model is similar to the idea of the bounded propagation model proposed in \cite{OAZJH04}\cite{ODPT04}. The Nakagami-$m$ fading model is adopted in this paper because it covers several different fading models, such as Rayleigh (for $m=1$), Rician fading with parameter $K$ (for $m=(K+1)^2/(2K+1)$) and no fading (for $m\rightarrow\infty$), as well as intermediate fading distributions. It is of unit mean and variance and given by
\begin{equation}\label{Eqn:NakagamiFadPDF}
f_H(h)=\frac{m^m}{\Gamma(m)}h^{m-1}\exp(-mh),
\end{equation}
where  $m$ is a positive integer and $\Gamma(m)=\int_{0}^{\infty} z^{m-1}e^{-z}\, \dif z$ is the Gamma function.

Each receiver is able to successfully receive its desired information if its SIR is greater or equal to the target threshold $\beta$. That is, receiver node $Y_j$ is ``connected'' to the typical transmitter if
\begin{equation}\label{Eqn:SINRwThreshold}
\mathrm{SIR} \defn \frac{\,H_j \ell(|Y_j|)}{\,I_j}\geq \beta.
\end{equation}
Note that $\mathrm{SIR}$ depends on $\lambda_{\textsf{t}}$, i.e. $\mathrm{SIR}=\mathrm{SIR}(\lambda_{\textsf{t}})$. All the transmitters are assumed to use the same transmit power, the network is interference-limited, and $I_j$ is the aggregate interference at receive node $Y_j$ and a sum over the marked point processes. Namely,
\begin{equation}\label{Eqn:PoissonShotNoise1}
    I_j=\sum_{X_{i}\in\Phi^{\textsf{t}} \setminus\{X_0\}} \tilde{H}_{ij}\ell(|X_{i}-Y_j|),
\end{equation}
which is a Poisson shot noise process, and $\tilde{H}_{ij}$ is the fading channel gain from transmitter $X_i$ to receiver $Y_j$ in $\mathcal{R}_0$. Since $\Phi^{\textsf{t}}$ is stationary,  according to Slivnyak's theorem the statistics of signal reception seen by receiver $Y_j$ is the same as that seen by any other receivers in the same cluster. Thus $I_j$ can be evaluated at the origin, \ie \eqref{Eqn:PoissonShotNoise1} can be rewritten as $I_0=\sum_{X_i\in\Phi_t\setminus\{X_0\}}\tilde{H}_i \ell(|X_i|)$, where $\tilde{H}_i$ is the fading channel gain between transmitter $X_i$ and the origin.

Suppose the decoding delay is up to the lapse of $\tau$ transmission attempts for a transmitter. The connected receiver process for the $t$-th transmission is denoted by
\begin{equation}
\hat{\Phi}^{\textsf{c}}(t) = \left\{(Y_j,H_j(t))\in\Phi_0^{\textsf{r}}:  H_j(t)\ell(|Y_j|)\geq\beta I_0\right\},
\end{equation}
where $\{H_j(t)\}$ are i.i.d. for all $t\in [1,\cdots,\tau]$. Also, let $\Phi^{\textsf{c}}(\tau)$ be the connected receiver process at the $\tau$th attempt, \ie it is the set of all intended receivers in a cluster connected by their transmitter during the decoding delay, and thus it can be written as $\Phi^{\textsf{c}}(\tau) = \bigcup_{t=1}^{\tau} \hat{\Phi}^{\textsf{c}}(t)$. In other words, the connected receiver process can be described by a \emph{filtration} process\footnote{A filtration process means $\Phi^{\textsf{c}}(1)\subseteq \Phi^{\textsf{c}}(2)\cdots\subseteq \Phi^{\textsf{c}}(\tau)$, and for any set $\mathcal{A}\subseteq \mathcal{R}_0$, $\mathcal{A}(\Phi^{\textsf{c}}(\tau))\rightarrow \mathcal{A}(\Phi_0^{\textsf{r}})$ almost surely as $\tau\rightarrow \infty$ where $\mathcal{A}(\Phi)$ denotes the random number of point process $\Phi$ enclosed in set $\mathcal{A}$.}.

\subsection{Multicast Transmission Outage}\label{Sec:MultiTransOutage}
The transmission capacity of an ad hoc network introduced in \cite{SWXYJGAGDV05} is defined based on point-to-point transmission with an outage probability constraint $\epsilon\in(0,1)$, and is given by
\begin{equation}\label{Eqn:UnicastTC}
\tilde{c}_{\epsilon} = \tilde{b}\,\bar{\lambda}_{\textsf{t}}\,(1-\epsilon),
\end{equation}
where $\tilde{b}$ is the constant transmission rate a communication link can support (for example, about $\log_2(1+\beta)$), and $\bar{\lambda}_{\textsf{t}}$ is the maximum contention intensity subject to an outage probability target $\epsilon$. However, \eqref{Eqn:UnicastTC} cannot be directly applied to multicast because the multicast rate would be affected by $\lambda_{\textsf{r}}$ and $\bar{\lambda}_{\textsf{t}}$, and the outage of a multicast transmission is not point-to-point but \emph{point-to-multipoint}. How to declare an outage event for a transmitter multicasting information in the previous multicast transmission model is a key issue.

Since no desired receiver can be assumed to be dispensable, a reasonable way to define multicast outage is when \emph{any of the intended receivers of a transmitter does not receive a multicasted packet} during a period of time up to the decoding delay. That is, after all the allowed retransmissions have been used, if one of the desired receivers in the cluster has not decoded the packet, we declare an outage for this cluster. Thus, a multicast outage event of each multicast cluster can be described as $\mathcal{E}(\lambda_{\textsf{t}}) =\{\Phi_0^{\textsf{r}}\setminus\Phi^{\textsf{c}}(\tau)\neq \emptyset\}$ because $\mathcal{E}$ depends on $\lambda_{\textsf{t}}$. The probability of $\mathcal{E}(\lambda_{\textsf{t}})$ can be characterized by the intensity of the connected receivers during the lapse of $\tau$ attempts as follows:
\begin{eqnarray}\label{Eqn:DefOutProbMTC1}
\mathbb{P}[\mathcal{E}(\lambda_{\textsf{t}})] &\defn& 1-\mathbb{P}[\{\Phi_0^{\textsf{r}}\setminus\Phi^{\textsf{c}}(\tau)\}=\emptyset]\nonumber\\
&=& 1-\exp\left\{-\int_{\mathcal{R}_0}(\lambda_{\textsf{r}}-\lambda_{\textsf{c}}(Y,\tau))\,\mu(\dif Y)\right\},
\end{eqnarray}
where $\mu(\cdot)$ denotes a  Lebesgue measure and $\lambda_{\textsf{c}}(Y,\tau)$ is the intensity of $\Phi^{\textsf{c}}(\tau)$ at node $Y$. Using \eqref{Eqn:DefOutProbMTC1} to find multicast outage probability can be interpreted as finding the void probability of a ``disconnected'' PPP in a cluster. Since all of the intended receivers are uniformly distributed in $\mathcal{R}_0$, \eqref{Eqn:DefOutProbMTC1} can be rewritten as
\begin{equation}\label{Eqn:DefOutProbMTC2}
\mathbb{P}[\mathcal{E}(\lambda_{\textsf{t}})]=
1-\exp\left\{-\pi s^2\,(\lambda_{\textsf{r}}-\mathbb{E}_{R}[\lambda_{\textsf{c}}(R,\tau)])\right\}\leq \epsilon,
\end{equation}
where $R\in[0,s]$ is a random variable whose PDF is $f_R(r)=\frac{2r}{s^2}$. The outage probability in \eqref{Eqn:DefOutProbMTC2} cannot exceed its designated upper bound $\epsilon$ which is assumed to be a small value throughout this paper.

\begin{remark}
In Section \ref{Sec:RxConnectedPro}, we will show that  $\Phi^{\textsf{c}}(\tau)$ is a nonhomogeneous PPP and the average of its intensity $\mathbb{E}_R[\lambda_{\textsf{c}}(R,\tau)]$ can be found. Thus, the multicast outage probability  in \eqref{Eqn:DefOutProbMTC2} can be calculated. Also,  $\mathbb{E}_R[\lambda_{\textsf{c}}(R,\tau)]$ is a monotonically decreasing function of $\lambda_{\textsf{t}}$ so that the maximum $\bar{\lambda}_{\textsf{t}}$ is attained when  $\mathbb{E}_R[\lambda_{\textsf{c}}(R,\tau)]$ reduces to  its lower bound $\lambda_{\textsf{r}}+\frac{\ln(1-\epsilon)}{\pi s^2}$ that is obtained by solving \eqref{Eqn:DefOutProbMTC2}.
\end{remark}

\begin{remark}
The multicast outage probability in \eqref{Eqn:DefOutProbMTC2} will accurately approximate the unicast outage probability for a small $\epsilon$ if there are no retransmissions ($\tau=1$) and only one receiver is in a cluster. For a unicast scenario in each cluster, we have $\pi s^2\lambda_{\textsf{r}}=1$ and $\mathbb{E}_R[\lambda_{\textsf{c}}(R,1)]=\lambda_{\textsf{r}}\,\mathbb{P}[\mathrm{SIR}(\lambda_{\textsf{t}})\geq \beta]$. Thus, \eqref{Eqn:DefOutProbMTC2} becomes
\begin{equation*}
\mathbb{P}[\mathcal{E}(\lambda_{\textsf{t}})]=1-\exp(-\mathbb{P}[\mathrm{SIR}(\lambda_{\textsf{t}})<\beta])=\mathbb{P}[\mathrm{SIR}(\lambda_{\textsf{t}})<\beta]+O(\epsilon^2)
\end{equation*}
since $\mathbb{P}[\mathrm{SIR}(\lambda_{\textsf{t}})<\beta]\leq \epsilon$ and $e^{-\epsilon}=1-\epsilon+O(\epsilon^2)$. This shows that the point-to-point outage scenario is covered by our model, i.e. multicast outage is a generalization of point-to-point outage.
\end{remark}

\subsection{Definitions of Multicast Transmission Capacity, Largeness and Denseness of Networks}\label{Sec:DefnMTC}
Since the multicast outage probability is upper bounded by a small $\epsilon$, the maximum contention intensity $\bar{\lambda}_{\textsf{t}}$ is a function of $\epsilon$ and its  Taylor expansion for $\epsilon$ gives
\begin{equation}\label{Eqn:ApproxMaxContIntensity}
\bar{\lambda}_{\textsf{t}}(\epsilon)\defn\sup\{\lambda_{\textsf{t}}>0: \mathbb{P}[\mathcal{E}(\lambda_{\textsf{t}})]\leq \epsilon\}  = \bar{\lambda}_{\epsilon}+O(\epsilon^2),
\end{equation}
where $\bar{\lambda}_{\epsilon}$ is the Taylor expansion of $\bar{\lambda}_{\textsf{t}}(\epsilon)$ without the second and higher order terms of $\epsilon$. Since $\epsilon$ is small, $\bar{\lambda}_{\epsilon}\approx\bar{\lambda}_{\textsf{t}}(\epsilon)$ and thus, for simplicity, we will focus the analysis on $\bar{\lambda}_{\epsilon}$ in the following.
\begin{definition}[\textbf{Multicast Transmission Capacity}]\label{Def:MTC}
The multicast transmission capacity with the multicast outage probability defined in \eqref{Eqn:DefOutProbMTC1} for small $\epsilon$ is defined as
\begin{equation}\label{Eqn:MTCwoACK}
 C_{\epsilon} \defn \frac{1}{\tau}\,b\,\bar{\lambda}_{\epsilon}\,(1-\epsilon),
\end{equation}
where $\bar{\lambda}_{\epsilon}$ is the first order approximation of $\bar{\lambda}_{\textsf{t}}(\epsilon)$ as indicated in \eqref{Eqn:ApproxMaxContIntensity}, $b$ is the maximum achievable multicast rate on average for every cluster and it is not a constant in general.
\end{definition}
Multicast transmission capacity $C_{\epsilon}$ gives the number of successful multicast clusters with the maximum achievable multicast rate, that can coexist per unit area subject to decoding delay and multicast outage constraints. In other words, it is the area spectral efficiency of cluster-based multicast transmission. The following definitions of largeness and denseness of a network will be needed to acquire the scaling characterizations of the MTCs in the subsequent analysis. They are defined based on the circumstance that the average number of the intended receivers in a cluster is sufficiently large, \ie $k=\pi s^2 \lambda_{\textsf{r}} \gg 1$.
\begin{definition}[\textbf{Denseness and Largeness of a network with a PCP}]\label{Def:DenseLargeNetwork}
(a) We say a network is ``large'' if the area $\pi s^2$ of a cluster in the network is sufficiently large such that for a fixed intended receiver intensity $\lambda_{\textsf{r}}$ we have $k\gg 1$. (b) If the intended receiver intensity is sufficiently large such that for fixed area $\pi s^2$ we have $k\gg 1$, then such a network is called ``dense''. (c) A ``large dense'' network, it means that clusters in a network have a sufficiently large size as well as intended receiver intensity; namely, $\lambda_{\textsf{r}}\propto \pi s^2$ and thus $k \gg 1$.
\end{definition}
Here we should point out that Definition \ref{Def:DenseLargeNetwork} may not be consistent with some popular node-based unicast definitions in prior literature. For example, a dense network usually means it is dense everywhere (\ie \emph{uniformly} dense); however, our denseness definition could involve the case of \emph{local denseness} if the receiver intensity in a single cluster is sufficiently large whereas the cluster intensity is small.

\subsection{Multicast Transmission Methods -- Single-hop and Multihop}
The network model with a PCP for multicast introduced in the previous subsection implicitly assumes that the parent node of each cluster (see Fig. \ref{Fig:MulticastModel}) is  the sole transmitter. This is the case of single-hop multicast. Hence, for single-hop multicast, multicast outage probability only depends on the channel conditions between the parent node and its intended receiver nodes. When the size of clusters is large, the path loss of transmitted signals and the average number of the intended receivers for each parent node are both increased significantly. The MTC in this case will correspondingly decrease, so single-hop multicast is not an efficient means of disseminating information for a large and/or dense network.

To alleviate this drawback, we propose a multihop multicast approach. The idea is to allow retransmissions in a cluster by randomly selected receivers that have successfully received the packet already. Each of the selected receivers has its own small local multicast region, and the whole cluster is covered by the combination of small multicast regions. Note that only one selected receiver is allowed to transmit for each time slot in order to make all transmitters in each time slot still form a PPP. The detailed algorithm and modeling assumptions will be presented in Section \ref{Sec:MultihopMulticast}.

\section{Multicast Transmission Capacity with Single-Hop Multicast}\label{Sec:MTCwSinglehop}
In this section, we study the MTC when transmitters are multicasting to all of their intended receivers in a single-hop fashion. First we have to find the multicast outage probability defined in \eqref{Eqn:DefOutProbMTC2} and thus we need to study the intensity of the receiver-connected process during the lapse of  $\tau$ attempts. Then the maximum contention intensity which characterizes the single-hop MTC can be found based on the intensity of the connected receivers in a cluster.

\subsection{The Receiver-Connected Process}\label{Sec:RxConnectedPro}
During the allowed transmission $\tau$ attempts, the intended receivers in $\mathcal{R}_0$ connected by the transmitter $X_0$ form a receiver-connected process whose intensity is the necessary information to estimate the multicast outage probability. Since $\Phi^{\textsf{c}}$ is a filtration process and upper bounded by $\Phi^{\textsf{r}}$ (as explained in Section \ref{Sec:MultiTranModelRxConnProc}), the connected receiver intensity in $\mathcal{R}_0$ is an increasing function of $\tau$ as shown in the following lemma.
\begin{lemma}\label{Lem:NonCoopConnRxIntenNakaFading}
Consider the stationary PPP $\Phi^{\textsf{r}}_0$ of intensity $\lambda_{\textsf{r}}$ in the reference cluster $\mathcal{R}_0$. If a transmitter is allowed to transmit a packet up to $\tau$ times, then $\Phi^{\textsf{c}}(\tau)$ is a nonhomogeneous thinning PPP and the intensities of $\Phi^{\textsf{c}}(\tau)$ for different fading models are shown as follows. For Rayleigh fading, we have
\begin{equation}\label{Eqn:NonCoopConnIntenRayFading}
\lambda_{\textsf{c}}(r,\tau) = \lambda_{\textsf{r}}\left(1-\left\{1-\exp\left[-\pi\Delta_1(\beta r^{\alpha},\infty)\lambda_{\textsf{t}}\right]\right\}^{\tau}\right),
\end{equation}
and $\Delta_1(\cdot,\cdot)$ is defined in Proposition \ref{Prop:LapalceShotPPP} in Appendix \ref{App:ProofLaplaceShotPPP}. For Nakagami-$m$ fading with $m>1$, we have
\begin{equation}\label{Eqn:NonCoopConnIntenNakaFading}
\lambda_{\textsf{c}}(r,\tau) = \lambda_{\textsf{r}}\left\{1-\left[1-\Psi^{(m-1)}(m\beta\, r^{\alpha})\right]^{\tau}\right\},\quad r\in[1,s]
\end{equation}
where
\begin{eqnarray}
\Psi^{(m)}(\phi) &\defn& \frac{(-1)^m \phi^{m+1}}{m!}\,\frac{\emph{\dif}^{m}}{\emph{\dif} \label{Eqn:PsiMphi} \phi^{m}}\exp\left\{-\pi\,\lambda_{\textsf{t}}\,\Delta_1(\phi,\infty)-\log\phi\right\}.
\end{eqnarray}
\end{lemma}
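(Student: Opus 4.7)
The plan is to exploit the independent-thinning structure of PPPs together with the per-slot block-fading assumption. I would first condition on the transmitter configuration $\Phi^{\textsf{t}}$ and its fading marks, so that the interference $I_0^{(t)}$ in each of the $\tau$ slots is a fixed constant. Then each $\hat{\Phi}^{\textsf{c}}(t)$ is an independent thinning of the homogeneous PPP $\Phi^{\textsf{r}}_0$: a point at distance $r$ is retained with probability $F_H^{\textsf{c}}(\beta r^{\alpha}I_0^{(t)})$ depending only on the per-point i.i.d.\ mark $H_j(t)$, so $\hat{\Phi}^{\textsf{c}}(t)$ is a nonhomogeneous PPP. Because the fades are i.i.d.\ across slots, these thinnings are independent given the interferences; taking the union, a point at distance $r$ belongs to $\Phi^{\textsf{c}}(\tau)$ iff it is connected in at least one of the $\tau$ slots, which has probability $1-(1-p(r))^{\tau}$ after averaging over the i.i.d.\ $I_0^{(t)}$, where $p(r)=\mathbb{P}[H\,\ell(r)\ge\beta I_0]$ is the per-slot connection probability. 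This immediately yields $\lambda_{\textsf{c}}(r,\tau)=\lambda_{\textsf{r}}[1-(1-p(r))^{\tau}]$ and reduces the lemma to evaluating $p(r)$ in the two fading cases.

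For Rayleigh fading, $H\sim\mathrm{Exp}(1)$ gives $p(r)=\mathbb{E}[\exp(-\beta r^{\alpha}I_0)]=\mathcal{L}_{I_0}(\beta r^{\alpha})$. The Laplace transform of the Poisson shot noise $I_0$ with the truncated path-loss $\ell(\cdot)$ and unit-mean fading marks, furnished by Proposition \ref{Prop:LapalceShotPPP} in the appendix, equals $\exp(-\pi\lambda_{\textsf{t}}\Delta_1(\beta r^{\alpha},\infty))$; substituting this into the above expression reproduces \eqref{Eqn:NonCoopConnIntenRayFading}.

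For Nakagami-$m$ with integer $m>1$ I would start from the $\Gamma(m,1/m)$ CCDF
\begin{equation*}
F_H^{\textsf{c}}(h)=\sum_{k=0}^{m-1}\frac{(mh)^{k}}{k!}e^{-mh},
\end{equation*}
which together with $\mathbb{E}[I_0^{k}e^{-sI_0}]=(-1)^{k}\mathcal{L}_{I_0}^{(k)}(s)$ produces
\begin{equation*}
p(r)=\sum_{k=0}^{m-1}\frac{(-\phi)^{k}}{k!}\mathcal{L}_{I_0}^{(k)}(\phi)\Big|_{\phi=m\beta r^{\alpha}}.
\end{equation*}
The remaining algebraic step is to recognize this summation as $\Psi^{(m-1)}(\phi)$ as defined in \eqref{Eqn:PsiMphi}. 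I would verify this by rewriting $\exp\{-\pi\lambda_{\textsf{t}}\Delta_1(\phi,\infty)-\log\phi\}=\phi^{-1}\mathcal{L}_{I_0}(\phi)$ and applying Leibniz's rule to $\frac{\textmd{d}^{n}}{\textmd{d}\phi^{n}}[\phi^{-1}\mathcal{L}_{I_0}(\phi)]$ with $(\phi^{-1})^{(j)}=(-1)^{j}j!\phi^{-j-1}$. After reindexing with $k=n-j$, the prefactor $(-1)^{n}\phi^{n+1}/n!$ absorbs the combinatorial coefficients and the $\phi^{-j-1}$ factor, leaving $\sum_{k=0}^{n}(-\phi)^{k}\mathcal{L}_{I_0}^{(k)}(\phi)/k!$; setting $n=m-1$ closes the identification and produces \eqref{Eqn:NonCoopConnIntenNakaFading}.

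The main obstacle is the last algebraic identification: the Leibniz expansion of $\frac{\textmd{d}^{m-1}}{\textmd{d}\phi^{m-1}}[\phi^{-1}\mathcal{L}_{I_0}(\phi)]$ generates a double sum with alternating signs and factorial weights, and one has to track them carefully so that $(-1)^{m-1}\cdot(-1)^{m-1-k}=(-1)^{k}$ yields the correct overall sign. The rest is a direct application of independent thinning of a PPP plus the Poisson shot-noise Laplace transform; the ``nonhomogeneous PPP'' assertion for $\Phi^{\textsf{c}}(\tau)$ should be understood in the conditional (Cox-process) sense used throughout the paper, since the interferences at different receivers of the same cluster are correlated through the common $\Phi^{\textsf{t}}$, but the marginal intensity computation is unaffected by this correlation.
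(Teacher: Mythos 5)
Your overall route is the same as the paper's: identify the retention probability of a receiver at distance $r$, write it as $1-(1-p(r))^{\tau}$, and evaluate the per-slot success probability $p(r)$ via the Laplace transform of the Poisson shot noise from Proposition \ref{Prop:LapalceShotPPP} --- directly for Rayleigh, and through $(m-1)$-fold differentiation of $\phi^{-1}\mathcal{L}_{I_0}(\phi)$ for Nakagami-$m$. Your finite-sum form of the Gamma CCDF combined with Leibniz's rule is algebraically equivalent to the identity $\int_0^\infty \Gamma(m,aw)f_W(w)\,\dif w=(-1)^{m-1}a^m\frac{\dif^{m-1}}{\dif a^{m-1}}\left(\mathcal{L}_W(a)/a\right)$ that the paper uses, and your closing caveat that $\Phi^{\textsf{c}}(\tau)$ is really a Cox process is apt --- the paper's Laplace-functional computation makes the same independent-retention idealization across receivers.

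The one step that does not hold as you state it is the claim that the per-slot interferences $I_0^{(1)},\dots,I_0^{(\tau)}$ are i.i.d. In this model the interferer \emph{locations} $\Phi^{\textsf{t}}$ are common to all retransmission slots; only the fading marks $\tilde H_i(t)$ are redrawn. Hence the $I_0^{(t)}$ are exchangeable but positively correlated, and the exact retention probability is $1-\mathbb{E}_{\Phi^{\textsf{t}}}\bigl[\prod_{t=1}^{\tau}\mathbb{P}[H\ell(r)<\beta I_0^{(t)}\mid\Phi^{\textsf{t}}]\bigr]$, which differs from $1-(1-p(r))^{\tau}$: replacing the expectation of the product by the product of expectations is precisely the point needing justification. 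The paper acknowledges this --- step $(a)$ of its proof explicitly appeals to the fact that the temporal correlation of interference can be neglected for small $\lambda_{\textsf{t}}$, consistent with the regime $\lambda_{\textsf{t}}=\Theta(\epsilon)$ used downstream. You should either invoke the same small-$\lambda_{\textsf{t}}$ decorrelation argument explicitly, or present \eqref{Eqn:NonCoopConnIntenRayFading}--\eqref{Eqn:NonCoopConnIntenNakaFading} as approximations; as written, your ``averaging over the i.i.d.\ $I_0^{(t)}$'' silently asserts an independence that the model does not provide.
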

\begin{proof}
See Appendix \ref{App:ProofNonCoopConnRxInten}.
\end{proof}

The connection intensity $\lambda_{\textsf{c}}$ for Rayleigh fading in \eqref{Eqn:NonCoopConnIntenRayFading} reveals an interesting implication. The term $1-\exp[-\pi\Delta_1(\beta r^{\alpha},\infty)\lambda_{\textsf{t}}]$ can be interpreted as the probability that there is at least one interferer in the circular area of radius $\sqrt{\Delta_1(\beta r^{\alpha},\infty)}$. This circular area can be called the \emph{dominating interferer area} centered ar $r$ because any single interferer within this area can cause an outage at the receiver located at $r$.The effect of retransmission can be said either to make this probability reduce by $\tau$-fold or to enlarge the circular area. In a dense network with $\lambda_{\textsf{t}}\geq \lambda_{\textsf{r}}$, for example, the term $(1-\exp[-\pi\Delta_1(\beta r^{\alpha},\infty)\lambda_{\textsf{t}}])^{\tau}\approx 1-\tau\exp[-\pi\Delta_1(\beta r^{\alpha},\infty)\lambda_{\textsf{t}}]$ since $\pi\Delta_1(\beta r^{\alpha},\infty)\lambda_{\textsf{t}}\gg 1$. So the radius of the dominating interferer coverage is approximately increased by $\sqrt{\ln(\tau)}$-fold if $\tau>2$ in this case.

In addition, as shown in \eqref{Eqn:NonCoopConnIntenRayFading}-\eqref{Eqn:PsiMphi}, we know that a closed-form expression of the average connection intensity $\mathbb{E}[\lambda_{\textsf{c}}]$ in terms of $\lambda_{\textsf{t}}$ in a general network is difficult to find. However, an upper bound on $\mathbb{E}[\lambda_{\textsf{c}}]$ can be found in the following lemma, and thus in the special case of clusters with many intended receivers, \eqref{Eqn:NonCoopConnIntenRayFading} and \eqref{Eqn:NonCoopConnIntenNakaFading} can be simplified to allow a nearly closed-form solution of $\lambda_{\textsf{t}}$ (see Section \ref{Sec:SingleHopMTC}).
\begin{lemma}\label{Lem:BoundsAveConnInten}
The connection intensity $\lambda_{\textsf{c}}$ of a non-homogeneous PPP $\Phi^{\textsf{c}}(\tau)$ in  a cluster for Rayleigh fading is shown in \eqref{Eqn:NonCoopConnIntenRayFading} and its expression for Nakagami-m fading with $m>1$ is given by \eqref{Eqn:NonCoopConnIntenNakaFading}. The upper bounds on $\lambda_{\textsf{c}}$ for different fading models are given in the following:
\begin{eqnarray}
\mathbb{E}_R[\lambda_{\textsf{c}}(R,\tau)] &\leq& \lambda_{\textsf{r}}\left[1-(1-\mathbb{E}_R[\exp(-\pi\lambda_{\textsf{t}}\Delta_1(\beta R^{\alpha},\infty))])^{\tau}\right] ,\,\,\text{for Rayleigh fading}\label{Eqn:AvgConnIntenRayleighFading}\\
\mathbb{E}_R[\lambda_{\textsf{c}}(R,\tau)] &\leq& \lambda_{\textsf{r}}\left[1-(1-\mathbb{E}_R[\Psi^{(m-1)}(m\beta R^{\alpha})])^{\tau}\right] ,\,\,\text{for Nakagami-m fading, $m>1$}.\label{Eqn:AvgConnIntenNakagami}
\end{eqnarray}
\end{lemma}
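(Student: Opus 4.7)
The plan is to reduce both claims to a direct application of Jensen's inequality to the convex function $g(x)=(1-x)^{\tau}$ on $[0,1]$, starting from the pointwise expressions for $\lambda_{\textsf{c}}(r,\tau)$ already established in Lemma \ref{Lem:NonCoopConnRxIntenNakaFading}.

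First I would simply take the expectation of \eqref{Eqn:NonCoopConnIntenRayFading} and \eqref{Eqn:NonCoopConnIntenNakaFading} with respect to the radial random variable $R$ whose density is $f_R(r)=2r/s^2$ on $[1,s]$. This yields
\begin{equation*}
\mathbb{E}_R[\lambda_{\textsf{c}}(R,\tau)] = \lambda_{\textsf{r}}\Bigl(1-\mathbb{E}_R\bigl[(1-X(R))^{\tau}\bigr]\Bigr),
\end{equation*}
where $X(R)=\exp[-\pi\lambda_{\textsf{t}}\Delta_1(\beta R^{\alpha},\infty)]$ in the Rayleigh case and $X(R)=\Psi^{(m-1)}(m\beta R^{\alpha})$ in the Nakagami-$m$ case. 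In both cases $X(R)$ is the per-attempt SIR-success probability for a receiver at distance $R$, so $X(R)\in[0,1]$ almost surely; for Nakagami-$m$ this follows from the derivation in Appendix \ref{App:ProofNonCoopConnRxInten} where $\Psi^{(m-1)}$ arises as the complementary CDF of the interference functional evaluated through the Laplace-transform identity $\mathbb{P}[H\geq m\beta r^{\alpha}I]=\sum_{k=0}^{m-1}\frac{(-m\beta r^{\alpha})^k}{k!}\frac{\dif^k}{\dif \phi^k}\mathcal{L}_I(\phi)|_{\phi=m\beta r^\alpha}$.

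Next I would invoke convexity: for any integer $\tau\geq 1$ the function $g(x)=(1-x)^{\tau}$ is convex on $[0,1]$ because $g''(x)=\tau(\tau-1)(1-x)^{\tau-2}\geq 0$ (with equality only at $x=1$ or $\tau=1$, in which case the bound holds trivially with equality). Jensen's inequality then gives
\begin{equation*}
\mathbb{E}_R\bigl[(1-X(R))^{\tau}\bigr] \,\geq\, \bigl(1-\mathbb{E}_R[X(R)]\bigr)^{\tau}.
\end{equation*}
Substituting this lower bound back into the expression for $\mathbb{E}_R[\lambda_{\textsf{c}}(R,\tau)]$ (which reverses the inequality because of the $1-(\cdot)$ and the positive factor $\lambda_{\textsf{r}}$) immediately produces \eqref{Eqn:AvgConnIntenRayleighFading} and \eqref{Eqn:AvgConnIntenNakagami}.

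The only substantive thing to check is that $X(R)\in[0,1]$ in the Nakagami-$m$ case, since for Rayleigh it is manifestly an exponential of a nonpositive quantity. I expect this to be the main, but still routine, obstacle: one reads off from the definition \eqref{Eqn:PsiMphi} and the fact that $\mathcal{L}_I(\phi)e^{-\log\phi\cdot 0}$-type manipulations make $\Psi^{(m-1)}(m\beta r^{\alpha})$ equal to $\mathbb{P}[\mathrm{SIR}\geq \beta\mid R=r]$, which is a probability and hence lies in $[0,1]$. Once this is in hand, no further computation is needed; the lemma is purely an application of Jensen's inequality to the filtration-based pointwise intensity formulas from Lemma \ref{Lem:NonCoopConnRxIntenNakaFading}.
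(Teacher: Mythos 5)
Your proposal is correct and follows essentially the same route as the paper: both arguments take the expectation of the pointwise intensities from Lemma \ref{Lem:NonCoopConnRxIntenNakaFading} and then apply the single inequality $(\mathbb{E}[D_1])^{\tau}\leq\mathbb{E}[D_1^{\tau}]$ with $D_1=1-X(R)\in[0,1]$, which you obtain via Jensen applied to the convex map $x\mapsto(1-x)^{\tau}$ and the paper obtains as a special case of H\"{o}lder's inequality with $D_2=1$, $p=\tau$. Your extra check that $\Psi^{(m-1)}(m\beta R^{\alpha})$ is a per-attempt success probability in $[0,1]$ is a sensible (and correct) hygiene step the paper omits, but it does not change the substance of the argument.
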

\begin{proof}
According to the H\"{o}lder inequality, for two real-valued random variables $D_1$ and $D_2$, $(\mathbb{E}[D_1D_2])^p \leq \mathbb{E}[D^p_1]\mathbb{E}[D^q_2]^{p/q}$, where $p,q>0$ and $1/p+1/q=1$. Consider the number of transmission attempts $\tau>1$ and $D_2=1$,  and let $p=\tau$ and $q=\frac{\tau}{\tau-1}$. Then it follows that $(\mathbb{E}[D_1])^{\tau}\leq \mathbb{E}[D_1^{\tau}]$. For Rayleigh fading, taking average on the both sides of \eqref{Eqn:NonCoopConnIntenRayFading} and using  this property $(\mathbb{E}[D_1])^{\tau}\leq \mathbb{E}[D_1^{\tau}]$ by letting $D_1 = 1-\exp(-\pi\lambda_{\textsf{t}}\Delta_1(\beta R^{\alpha},\infty))$, \eqref{Eqn:AvgConnIntenRayleighFading} follows. Similarly, \eqref{Eqn:AvgConnIntenNakagami} can be shown in the same way.
\end{proof}

\subsection{Single-hop Multicast Transmission Capacity}\label{Sec:SingleHopMTC}
Now we characterize the MTC in a network when a transmitter directly multicasts its intended receivers in a cluster.
\begin{theorem}\label{Thm:MaxContenIntenNonCoop}
Suppose the multicast outage probability given in \eqref{Eqn:DefOutProbMTC2} is upper bounded by small $\epsilon$ and the maximum decoding delay is $\tau$ transmission attempts. If the average number of the intended receivers in a cluster with radius $s$ is $k$ and $k\geq \frac{1}{\epsilon^{\tau-1}}$, then the maximum contention intensity is
\begin{equation} \label{Eqn:MaxContenIntenNoCoop}
\bar{\lambda}_{\epsilon}=\frac{\eta\,\rho\,\tau^2}{s^2\beta^{\frac{2}{\alpha}}\,\sqrt[\tau]{k}} =\Theta\left(\frac{\rho\,\tau^2}{s^2\beta^{\frac{2}{\alpha}}\,\sqrt[\tau]{k}}\right),
\end{equation}
where $\rho=\frac{1}{\tau^2}\sqrt[\tau]{\epsilon(\tau+1)}$, $\beta$ is the SIR threshold for successfully decoding and  $\eta$ is a constant (depending on $m$, $\beta$ and $\alpha$).
\end{theorem}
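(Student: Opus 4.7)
The plan is to convert the outage constraint in \eqref{Eqn:DefOutProbMTC2} into an explicit equation for $\bar{\lambda}_{\epsilon}$ by Taylor-expanding the exact connection intensities from Lemma \ref{Lem:NonCoopConnRxIntenNakaFading}. As indicated in the Remark after that lemma, $\bar{\lambda}_{\textsf{t}}(\epsilon)$ is attained when $\mathbb{E}_R[\lambda_{\textsf{c}}(R,\tau)]$ saturates its lower bound $\lambda_{\textsf{r}}+\ln(1-\epsilon)/(\pi s^{2})$. Since $-\ln(1-\epsilon)=\epsilon+O(\epsilon^{2})$ and only the first-order-in-$\epsilon$ part $\bar{\lambda}_{\epsilon}$ is sought (recall \eqref{Eqn:ApproxMaxContIntensity}), the equation to solve becomes
\begin{equation*}
\lambda_{\textsf{r}}\,\mathbb{E}_R\!\left[1-\tfrac{\lambda_{\textsf{c}}(R,\tau)}{\lambda_{\textsf{r}}}\right] \;=\; \frac{\epsilon}{\pi s^{2}}.
\end{equation*}
The hypothesis $k\geq 1/\epsilon^{\tau-1}$ will turn out to make the relevant Taylor expansion self-consistent at the resulting $\bar{\lambda}_{\epsilon}$.

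For Rayleigh fading I would read off \eqref{Eqn:NonCoopConnIntenRayFading} that $1-\lambda_{\textsf{c}}(r,\tau)/\lambda_{\textsf{r}} = (1-\exp[-\pi\lambda_{\textsf{t}}\Delta_1(\beta r^{\alpha},\infty)])^{\tau}$. From Proposition \ref{Prop:LapalceShotPPP}, $\Delta_1(\phi,\infty)$ is of order $C(\alpha)\phi^{2/\alpha}$ up to corrections coming from the unit exclusion radius in \eqref{Eqn:PathLossModel}; and $1-e^{-x}=x+O(x^{2})$ for small $x$. Thus the integrand collapses to $(\pi\lambda_{\textsf{t}} C(\alpha)\beta^{2/\alpha})^{\tau}\,r^{2\tau}(1+O(\lambda_{\textsf{t}} s^{2}))$. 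Averaging against $f_R(r)=2r/s^{2}$ produces the moment $\mathbb{E}_R[R^{2\tau}]=s^{2\tau}/(\tau+1)$, so the outage equation reduces to $\lambda_{\textsf{r}}(\pi\lambda_{\textsf{t}} C(\alpha)\beta^{2/\alpha} s^{2})^{\tau}/(\tau+1)=\epsilon/(\pi s^{2})$. Since $k=\pi s^{2}\lambda_{\textsf{r}}$, taking the $\tau$-th root delivers
\begin{equation*}
\bar{\lambda}_{\epsilon} \;=\; \frac{\sqrt[\tau]{\epsilon(\tau+1)/k}}{\pi C(\alpha)\beta^{2/\alpha}s^{2}} \;=\; \frac{\eta\,\rho\,\tau^{2}}{s^{2}\beta^{2/\alpha}\sqrt[\tau]{k}},
\end{equation*}
with $\rho=\sqrt[\tau]{\epsilon(\tau+1)}/\tau^{2}$ and $\eta=1/(\pi C(\alpha))$, matching the claim. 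Substituting back gives $\pi\bar{\lambda}_{\epsilon} C(\alpha)\beta^{2/\alpha} s^{2}=\sqrt[\tau]{\epsilon(\tau+1)/k}$, and $k\geq 1/\epsilon^{\tau-1}$ makes this $O(\epsilon)$, so the $O(x^{2})$ terms I dropped above contribute only to the $O(\epsilon^{2})$ remainder already accounted for in \eqref{Eqn:ApproxMaxContIntensity}.

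For Nakagami-$m$ fading with $m>1$ the structure is identical with $\Psi^{(m-1)}(m\beta r^{\alpha})$ in place of $\exp[-\pi\lambda_{\textsf{t}}\Delta_1]$, so I would expand $\Psi^{(m-1)}$ in $\lambda_{\textsf{t}}$ using \eqref{Eqn:PsiMphi}. At $\lambda_{\textsf{t}}=0$ the exponent reduces to $-\log\phi$ and $\frac{\dif^{m-1}}{\dif\phi^{m-1}}\phi^{-1}=(-1)^{m-1}(m-1)!\phi^{-m}$ immediately gives $\Psi^{(m-1)}(\phi)=1$; differentiating once in $\lambda_{\textsf{t}}$ pulls down the factor $-\pi\Delta_1(\phi,\infty)\propto \phi^{2/\alpha}$, and the subsequent $(m-1)$ derivatives in $\phi$ combined with the prefactor $\phi^{m}/(m-1)!$ leave a clean $\phi^{2/\alpha}=m^{2/\alpha}\beta^{2/\alpha}r^{2}$ after all powers cancel. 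Hence $1-\Psi^{(m-1)}(m\beta r^{\alpha})=\pi\lambda_{\textsf{t}}\,c_{m,\alpha}\beta^{2/\alpha}r^{2}+O(\lambda_{\textsf{t}}^{2})$ for a constant $c_{m,\alpha}$, and the rest of the Rayleigh argument transfers verbatim with $\eta$ replaced by $1/(\pi c_{m,\alpha})$. The main obstacle is precisely this Nakagami bookkeeping: one has to verify that the combinatorics of the $m-1$ $\phi$-derivatives really produce the advertised $\phi^{2/\alpha}$ leading term and that the higher-order-in-$\lambda_{\textsf{t}}$ terms remain subdominant once the self-consistency bound $\lambda_{\textsf{t}} s^{2}=O(\epsilon)$ is inserted; the remaining moment calculation and the final inversion are identical to the Rayleigh case.
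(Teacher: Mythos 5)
Your proposal is correct and follows essentially the same route as the paper's proof: saturate the outage constraint at $\mathbb{E}_R[\lambda_{\textsf{c}}]=\lambda_{\textsf{r}}+\ln(1-\epsilon)/(\pi s^2)$, use $k\geq 1/\epsilon^{\tau-1}$ to justify $\lambda_{\textsf{t}}=\Theta(\epsilon)$ and hence the first-order expansion of $\exp(-\pi\lambda_{\textsf{t}}\Delta_1)$ (resp.\ $\Psi^{(m-1)}$, whose leading coefficient the paper likewise states as $\Delta_1\prod_{j=1}^{m-1}(1-2/j\alpha)$ without detailed bookkeeping), reduce the average to the moment $\mathbb{E}[R^{2\tau}]=s^{2\tau}/(\tau+1)$, and invert to get $\bar{\lambda}_{\epsilon}$. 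The only cosmetic difference is that you work directly with $\mathbb{E}_R[(\cdot)^{\tau}]$ instead of routing through the H\"older bound of Lemma \ref{Lem:BoundsAveConnInten}, but after linearization both computations coincide.
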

\begin{proof}
See Appendix \ref{App:ProofMaxContenIntenNonCoop}.
\end{proof}

\begin{remark}
The scaling function $\Theta(\cdot)$ of $\bar{\lambda}_{\epsilon}$ in \eqref{Eqn:MaxContenIntenNoCoop} only contains the ``controllable'' network parameters such as $s$, $\epsilon$, $\tau$ and $\beta$, which means their values are adjustable if needed. Constant $\eta$ contains the parameter $m$ of Nakagami fading, which is a channel characteristic and usually uncontrollable and thus $\eta$ is not left in $\Theta(\cdot)$.
\end{remark}

If a unicast planar network without retransmission is considered (i.e.  $k=\tau=1$), $\bar{\lambda}_{\epsilon}$ in \eqref{Eqn:MaxContenIntenNoCoop} will reduce to the previous results discovered, \ie $\bar{\lambda}_{\epsilon}=\Theta\left(\frac{\epsilon}{s^2\,\beta^{2/\alpha}}\right)$. In \cite{SWXYJGAGDV05}, for example, the maximum contention intensities of FH-CDMA and DS-CDMA are $\Theta\left(\frac{\epsilon M}{s^2\beta^{2/\alpha}}\right)$ and $\Theta\left(\frac{\epsilon M^{2/\alpha}}{s^2\beta^{2/\alpha}}\right)$ respectively, where $M$ is the channel number of FH-CDMA and the spreading factor of DS-CDMA. It is easy to check that these two results coincide with ours here by considering $\frac{\bar{\lambda}_{\epsilon}}{M}$ for FH-CDMA and $\frac{\beta}{M}$ for DS-CDMA. In addition, the longest transmission distance in a cluster is $s$ and we know $\bar{\lambda}_{\epsilon}=\Theta(s^{-2})$ and so is the network capacity, which also coincides with the results in \cite{SWXYJGAGDV05}\cite{FBBBPM06}.

The result in \eqref{Eqn:MaxContenIntenNoCoop} only indicates how much the maximum intensity of transmitters can be supported in a network under the decoding delay and multicast outage constraints. Having the maximum contention intensity only is unable to tell us how much its corresponding MTC should be since the multicast rate $b$ is also affected by the maximum contention intensity. Considering a capacity-approaching code is used, the maximum achievable multicast rate $b$ that is acceptable for all intended receivers is the following ergodic channel capacity evaluated at the boundary of a cluster:
\begin{equation}\label{Eqn:MaxBcRateWoTimeDiv}
b = \mathbb{E}\left[\log \left(1+\frac{H_{\max}\,s^{-\alpha}}{I_0}\right)\right].
\end{equation}
where $H_{\max}= \max_{t\in[1,\cdots,\tau]}H(t)$ and $H(t)$ is the fading channel gain for the $t$-th transmission between typical transmitter $X_0$ and a receiver located on the cluster boundary. Although there may be no receivers on the boundary of a cluster, multicast rate $b$ should be considered from a worse case point of view because it needs to be acceptable for all intended receivers in any locations within a cluster. The bounds on $b$ are given in the following lemma.
\begin{lemma}\label{Lem:NonCoopBoundsMCrate}
There exists a $\delta\in(0,1)$ such that the bounds on the multicast rate $b$ in \eqref{Eqn:MaxBcRateWoTimeDiv} can be given by
\begin{equation}\label{Eqn:BoundsNonCoopBCRate}
\delta\,\log\left(1+\frac{1}{\pi s^2\lambda_{\textsf{t}}}\right) \leq b\leq\log\left(1+\frac{1}{\pi s^2\lambda_{\textsf{t}}}\right)+O(1).
\end{equation}
\end{lemma}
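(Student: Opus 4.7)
The plan is to prove the two inequalities separately, each by reducing $b$ to a mean-field quantity and then matching it to $\log(1+1/(\pi s^{2}\lambda_{\textsf{t}}))$ up to the allowed additive $O(1)$ constant or multiplicative factor $\delta$.

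For the upper bound, first note that $H_{\max}$ and $I_{0}$ are independent, since the fading on the desired link and the fading/positions of the interferers come from disjoint families of random variables. Applying Jensen's inequality to the concave map $x\mapsto\log(1+x)$ then yields
\begin{equation*}
b \;\leq\; \log\!\Bigl(1 + s^{-\alpha}\,\mathbb{E}[H_{\max}]\,\mathbb{E}[I_{0}^{-1}]\Bigr).
\end{equation*}
The inverse moment $\mathbb{E}[I_{0}^{-1}]$ can be extracted from the Laplace transform of the Poisson shot noise via the identity $\mathbb{E}[I_{0}^{-1}]=\int_{0}^{\infty}\mathcal{L}_{I_{0}}(u)\,\dif u$, and $\mathcal{L}_{I_{0}}$ is the same object that enters the definition of $\Delta_{1}$ in Proposition \ref{Prop:LapalceShotPPP}. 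A direct calculation should show that $s^{-\alpha}\mathbb{E}[H_{\max}]\mathbb{E}[I_{0}^{-1}]$ differs from $1/(\pi s^{2}\lambda_{\textsf{t}})$ by a multiplicative constant $C$ depending only on $\alpha$, $m$ and $\tau$; rewriting $\log(1+Cx)=\log(1+x)+\log\bigl((1+Cx)/(1+x)\bigr)$ then absorbs this factor into an additive $O(1)$ term.

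For the lower bound, the plan is to exploit the pointwise inequality $\log(1+X)\geq \log(1+t)\,\mathbf{1}\{X\geq t\}$ valid for any $t>0$, applied with $X=H_{\max}s^{-\alpha}/I_{0}$ and $t=1/(\pi s^{2}\lambda_{\textsf{t}})$. Taking expectations gives
\begin{equation*}
b \;\geq\; \log\!\Bigl(1+\tfrac{1}{\pi s^{2}\lambda_{\textsf{t}}}\Bigr)\,\mathbb{P}\!\Bigl[\,I_{0}\leq \pi s^{2-\alpha}\lambda_{\textsf{t}}\,H_{\max}\,\Bigr],
\end{equation*}
so it suffices to bound the indicator probability below by some $\delta\in(0,1)$ independent of $\lambda_{\textsf{t}}$. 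Because $\mathbb{E}[I_{0}]=2\pi\lambda_{\textsf{t}}/(\alpha-2)$ scales linearly in $\lambda_{\textsf{t}}$, the normalized interference $I_{0}/\lambda_{\textsf{t}}$ has an $O(1)$ distribution across all regimes; combined with the full-support Nakagami tail of $H_{\max}$, a Markov-type estimate on $I_{0}/\lambda_{\textsf{t}}$ delivers the required uniform lower bound.

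The hard part will be matching constants uniformly in $\lambda_{\textsf{t}}$. Controlling $\mathbb{E}[I_{0}^{-1}]$ requires a careful asymptotic analysis of the shot-noise Laplace integral, because the truncated path-loss law in \eqref{Eqn:PathLossModel} causes the near-field and far-field contributions to behave qualitatively differently. Similarly, the indicator probability in the lower bound naturally splits into a dense-regime concentration argument for $I_{0}$ and a sparse-regime smallness argument, both of which must be combined before a single $\delta$ valid throughout can be extracted. Once these quantitative estimates are in place, the two inequalities follow by the elementary $\log$-manipulations above.
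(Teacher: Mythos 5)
Your lower bound takes a genuinely different route from the paper's. The paper applies Jensen's inequality in $I_0$ (the map $x\mapsto\log(1+a/x)$ is convex), replacing $I_0$ by $\mathbb{E}[I_0]=2\pi\lambda_{\textsf{t}}/(\alpha-2)$ via Campbell's theorem, and then restricts to the event $\{H_{\max}\geq 2/(\alpha-2)\}$, which yields $\delta=F^{\texttt{c}}_{H_{\max}}\left(\frac{2}{\alpha-2}\right)$ explicitly. Your indicator trick is fine as far as it goes, but it leaves you needing $\mathbb{P}\left[I_0\leq \pi s^{2-\alpha}\lambda_{\textsf{t}}H_{\max}\right]\geq\delta$ uniformly, and the Markov-type estimate you invoke only gives $\mathbb{P}[I_0\leq a]\geq 1-\frac{2\pi\lambda_{\textsf{t}}}{(\alpha-2)a}$; with $a=\pi s^{2-\alpha}\lambda_{\textsf{t}}h$ this is $1-\frac{2s^{\alpha-2}}{(\alpha-2)h}$, which becomes vacuous as soon as $s^{\alpha-2}$ exceeds a constant multiple of $h$ --- i.e., precisely in the large-network regime. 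Closing that gap requires lower-tail information on $I_0$ beyond its mean (the typical value of the shot noise is $\Theta(\lambda_{\textsf{t}}^{\alpha/2})$, far below $\mathbb{E}[I_0]=\Theta(\lambda_{\textsf{t}})$), which you acknowledge but do not supply.

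The upper bound is the more serious problem. The Jensen step $b\leq\log\left(1+s^{-\alpha}\mathbb{E}[H_{\max}]\,\mathbb{E}[I_0^{-1}]\right)$ is valid, but your premise that $s^{-\alpha}\mathbb{E}[H_{\max}]\mathbb{E}[I_0^{-1}]$ matches $1/(\pi s^2\lambda_{\textsf{t}})$ up to a constant depending only on $\alpha$, $m$, $\tau$ is false. The inverse moment $\mathbb{E}[I_0^{-1}]$ is controlled by the typical (small) realizations of the shot noise, not by its mean: for small $\lambda_{\textsf{t}}$ one has $\mathbb{E}[I_0^{-1}]=\Theta(\lambda_{\textsf{t}}^{-\alpha/2})$, not $\Theta(\lambda_{\textsf{t}}^{-1})$, and it carries no compensating factor of $s^{\alpha-2}$. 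Your route therefore delivers $b\leq\frac{\alpha}{2}\log\frac{1}{\pi s^2\lambda_{\textsf{t}}}+O(1)$ in the regime $s^2\lambda_{\textsf{t}}\to 0$ where the lemma is applied, which is off by the multiplicative factor $\alpha/2>1$ in the leading term and does not establish \eqref{Eqn:BoundsNonCoopBCRate}. The paper avoids $\mathbb{E}[I_0^{-1}]$ entirely: it writes $\log\left(1+H_{\max}s^{-\alpha}/I_0\right)=\log\left(I_0+H_{\max}s^{-\alpha}\right)-\log I_0$, applies Jensen only to the first term, and bounds $-\mathbb{E}[\log I_0]$ against $-\log(\gamma_2\mathbb{E}[I_0])$; the point is that $\mathbb{E}[\log(1/I_0)]$, unlike $\log\mathbb{E}[1/I_0]$, can be matched to $\log(1/\mathbb{E}[I_0])$ up to an additive term. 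You should adopt that decomposition (or otherwise avoid the inverse moment) for the upper bound.
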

\begin{proof}
See Appendix \ref{App:ProofBoundsMCrate}.
\end{proof}

\begin{remark}\label{Rem:NonCoopBoundsMCrate}
The bounds on the multicast rate in \eqref{Eqn:BoundsNonCoopBCRate} are not affected by channel fading because the fading effect has been averaged out. Lemma \ref{Lem:NonCoopBoundsMCrate} suggests that $b$ is significantly reduced by the aggregate interference from the transmitters in the cluster and the transmitters out of the cluster only reduce it by at most a constant.
\end{remark}

\textbf{Scaling Law of Single-hop MTC}. According to Theorem \ref{Thm:MaxContenIntenNonCoop} and Lemma \ref{Lem:NonCoopBoundsMCrate}, we found that the multicast rate $b$ is $\Theta(\log(k)/\tau)$ for any network conditions if $\bar{\lambda}_{\epsilon}$ is achieved and $k$ is sufficiently large. The MTCs in a network without receiver cooperation can be concluded as follows. (\textbf{i}) For a dense network, $\bar{\lambda}_{\epsilon}=\Theta\left(\frac{\rho\tau^2}{\sqrt[\tau]{k}}\right)$ since $\pi s^2$ is fixed and $k\gg 1$. By the MTC definition, we know $C_{\epsilon}=\Theta\left(\frac{\rho\log(k)}{\sqrt[\tau]{k}}\right)$. (\textbf{ii}) If the network is large, then $\bar{\lambda}_{\epsilon}=\Theta\left(\frac{\rho\tau^2}{k^{1+1/\tau}}\right)$ and thus $C_{\epsilon}=\Theta\left(\frac{\rho\log(k)}{k^{(1+1/\tau)}}\right)$. (\textbf{iii}) For a large dense network, $\lambda_{\textsf{r}}=\Theta(s^2)$ and $k\gg 1$. So $\bar{\lambda}_{\epsilon}$ is $\Theta\left(\rho\,\tau^2\,k^{-\left(\frac{\tau+2}{2\tau}\right)}\right)$ and thus $C_{\epsilon}=\Theta\left(\rho k^{-\left(\frac{\tau+2}{2\tau}\right)}\log(k)\right)$. In summary, the MTC here can be expressed in a general form as follows:
\begin{equation}\label{Eqn:GeneralMTC}
 C_{\epsilon} = \Theta\left(\rho\,k^{x}\log(k)\right),
\end{equation}
where $x$ has been given in Table \ref{Tab:MainResultsMTC}. Note again that the scaling function $\Theta(\cdot)$ is applied only to the main controllable variables for a specific network. For example, the main controllable parameters for a dense network are $\lambda_{\textsf{r}}$ (or $k$) and $\tau$ whereas $\pi s^2$ is a constant.

A simulation example of the MTCs for a large network with Rayleigh fading is presented in Fig. \ref{Fig:ShMTCwoRxCoop1}. One can see that the MTCs decrease when $s$ increases (i.e. $k$ increases), and slightly increasing the radius of a cluster can significantly reduces the MTC. The decoding delay constraint $\tau$ has a significant effect on the MTCs as well. Since retransmissions decrease outage probability as well as increase interference, it can be observed from \eqref{Eqn:MTCwoACK} that there exists an optimal tradeoff between $\tau$ and $\bar{\lambda}_{\epsilon}$. This can be observed in Fig. \ref{Fig:ShMTCwoRxCoop1}, for example, the MTC of $\tau=3$ is larger than that of $\tau=10$, which indicates a few retransmissions indeed increase the MTC and too many retransmissions are, on the contrary, detrimental to it. Fig. \ref{Fig:ShMTCwoRxCoop2} presents the MTC results for a large dense network with Rayleigh fading. If we compare Fig. \ref{Fig:ShMTCwoRxCoop2} with Fig. \ref{Fig:ShMTCwoRxCoop1}, we can see that denseness does not have a serious impact on MTC as largeness. This observation coincides with the scaling law stated in above. Thus, path loss is the main key issue of limiting the MTC, which enlightens us the idea of using multihop multicast to deliver packets with a less path loss (see Section \ref{Sec:MTCwMultihop}).

\section{Multicast Transmission Capacity with Multihop Multicast}\label{Sec:MTCwMultihop}
The MTCs for single-hop multicast investigated in Section \ref{Sec:MTCwSinglehop} scale like $\Theta(s^{-2})$ if other network parameters are fixed. Therefore, the single-hop MTC increases when its multicast cluster is shrunk. However, shrinking the cluster is not welcome if the packets must be transmitted over the same coverage. So here we would like to know if there is another method to increase the MTC without shrinking the cluster.
From previous results, for sufficiently large $k$ the scaling of the single-hop MTC can be written as
\begin{equation}\label{Eqn:SingleHopMTC}
  C_{\epsilon}=\Theta\left(\frac{\rho\,k^{x}\log(k)}{s^2\,\beta^{\frac{2}{\alpha}}}\right),
\end{equation}
where $\rho=\frac{1}{\tau^2}\sqrt[\tau]{\epsilon(\tau+1)}$ and $x=-\frac{1}{\tau}$.
%$\rho=\sqrt[\tau]{\epsilon}/\tau^2$ and $x=1-\frac{1}{\tau}$ for receiver cooperation and
So \eqref{Eqn:SingleHopMTC} suggests three approaches to increasing the MTC: interference-avoidance, interference-suppression and area-shrinking methods. The capacity gain due to interference avoidance  can be acquired by removing co-channel interferers such that $\mathrm{SIR}(\lambda_{\textsf{t}})$ is improved by reducing $\lambda_{\textsf{t}}$. This is the context when each transmitter independently selects its own transmission channel from several available channels (see the case of FH-CDMA in \cite{SWXYJGAGDV05}). Interference avoidance does not affect any network parameters except the multicast rate $b$. The interference-suppression capacity gain can be obtained by signal processing techniques to increase the SIR so that multicast rate $b$ is increased (see the case of DS-CDMA in \cite{SWXYJGAGDV05}). In addition, suppressing interference is  equivalent to relaxing the SIR  threshold $\beta$ and thus pre-constant $\eta$ depending on $\beta$  is increased. Of course, the area-shrinking capacity gain is attained by multicasting in a smaller region instead of the whole cluster. Shrinking a cluster only leads to a decrease in the cluster radius $s$ and thus the average number of the intended receivers $k$ becomes smaller.

Let $\{g_a,g_s,g_v\}> 1$ be the interference-avoidance, interference-suppression and area-shrinking gain parameters, respectively. Then respectively replacing $s^2$, $\beta$ and $\bar{\lambda}_{\epsilon}$ by $s^2/g_v$, $\beta/g_s$ and $\bar{\lambda}_{\epsilon}/g_a$ in \eqref{Eqn:SingleHopMTC}, we obtain
\begin{equation}\label{Eqn:GenScalingMTCwGain}
C_{\epsilon}=\Theta\left(\frac{\rho\,g_a\,g_s^{2/\alpha}\,g_v^{1-x}\,k^{x}\,\log(k)}{s^2\beta^{\frac{2}{\alpha}}}\right).
\end{equation}
The three gain parameters in \eqref{Eqn:GenScalingMTCwGain} indicate which method is able to contribute more to the MTC in each context. The interference-avoidance method is superior to interference-suppression when $g_a$ and $g_s$ are equal because $g_a/g_s^{2/\alpha}>1$ due to $\alpha>2$. This point has been shown for FH-CDMA and DS-CDMA in \cite{SWXYJGAGDV05}. If the cluster is shrunk to a smaller region of area $\pi s^2/g_v$ and all the three gain parameters are equal, then shrinking is the best way to improve the MTC. According to \eqref{Eqn:GenScalingMTCwGain}, we can conjecture that the MTC could be increased if the cluster is tessellated into several smaller regions and a packet is allowed to be multicasted in each of them up to some times under the condition that the decoding delay and multicast outage constraints both have to be satisfied. This conjecture will be verified later in the following subsection.

\subsection{Multicast over Multihop}\label{Sec:MultihopMulticast}
Suppose the cluster $\mathcal{R}_0$ is tessellated into a certain number of smaller multicast regions of equal area. A packet is multicasted the same number of times (called a multicast time slot) in each tessellated region. So the packet is delivered slot by slot from the central typical transmitter to those regions in a certain order. Note that the number of the tesselated regions cannot exceed the decoding delay constraint $\tau$, \ie the packet is delivered to all its intended receivers at most $\tau-1$ hops. As shown in Fig. \ref{Fig:MultihopTransModel}, for example, each cluster consists of 6 smaller tessellated regions and the decoding delay constraint for each region is 2 if $\tau=12$.

Delivering a packet by the multihop multicast method proceeds as follows. In the first time slot, the typical transmitter aims at multicasting its own region. In the second time slot, the typical transmitter randomly selects a receiver in the neighboring regions that successfully received the packet in the previous time slot, then the selected receiver becomes the transmitter for the next time slot. The multihop multicast method proceeds slot by slot in this way until all the tessellated regions are visited by the packet. Note that in each time slot the transmitter is asked to ensure all the receivers in its region and at least one receiver in the neighboring regions should receive the packet; otherwise, there is an outage. In addition, although there might be more than one receivers in the neighboring regions which successfully receive the packet; however, only one of them is selected to multicast in the next time slot in order to satisfy the assumption of the PPP of the transmitters (see Proposition \ref{Prop:DualityPcpPpp} in Appendix \ref{App:DualityPcpPpp} for the dual property between PCP and PPP). The scaling realizations of the maximum contention intensity for the above multihop multicast method are shown in the following theorem.
\begin{theorem}\label{Thm:MultihopMulticastMaxInten}
Suppose cluster $\mathcal{R}_0$ is tessellated into $v$ smaller multicast regions of equal area and each packet is allowed to be multicasted in each region at most $\tau/v$ times. If the average number of the intended receivers in a cluster $k\geq\frac{v}{\epsilon^{\tau/v-1}}$, then the following scaling characterization of the maximum contention intensity $\bar{\lambda}_{\epsilon}$ is achieved with high probability:
\begin{equation}\label{Eqn:MultihopMaxContenInten}
\bar{\lambda}_{\epsilon}=\frac{\eta\, k^{x}v^y\,\tau^2\,\rho}{s^2\beta^{\frac{2}{\alpha}}}=\Theta\left(\frac{k^{x}v^y\,\tau^2\,\rho}{s^2\beta^{\frac{2}{\alpha}}}\right),
\end{equation}
where $x=-\frac{v}{\tau}$, $y=\frac{v}{\tau}+1$, $\rho=\frac{1}{\tau^2}(\epsilon(\tau/v+1))^{\frac{v}{\tau}}$ and $\eta$ is a constant depending on $m$, $\beta$ and $\alpha$.
%for receiver noncooperation and $x=1-\frac{v}{\tau}$, $y=0$, $\rho=\epsilon^{\frac{v}{\tau}}/\tau^2$ for receiver cooperation.
\end{theorem}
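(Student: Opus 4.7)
The plan is to reduce Theorem 2 to Theorem 1 by reinterpreting each of the $v$ tessellated sub-regions as a single-hop multicast mini-cluster and then substituting rescaled parameters into the formula already proved in Theorem 1. Since the intended receivers $\Phi_0^{\textsf{r}}$ of $\mathcal{R}_0$ are uniformly and independently scattered with intensity $\lambda_{\textsf{r}}$, each sub-region of equal area $\pi s^2/v$ inherits the same receiver intensity; thus it behaves like a cluster of effective radius $s'=s/\sqrt{v}$, expected receiver count $k'=k/v$, and allocated decoding delay $\tau'=\tau/v$. The hypothesis $k\geq v/\epsilon^{\tau/v-1}$ is chosen precisely so that $k'\geq 1/\epsilon^{\tau'-1}$, which is exactly the feasibility condition required by Theorem 1 when applied to a single sub-region.

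Next I would argue that the aggregate interference seen by any receiver in a sub-region is stochastically identical to that in the single-hop analysis. The multihop protocol promotes at most one receiver in each cluster to be the active transmitter in each time slot, and by the dual PCP/PPP property (Proposition in Appendix A) together with Slivnyak's theorem, this active-transmitter process remains a stationary PPP of intensity $\lambda_{\textsf{t}}$ independently of the time slot. Consequently, the conditional outage inside any given sub-region is described by the same Poisson shot-noise model that led to Lemmas 1 and 2, and Theorem 1 can be invoked verbatim with the triple $(s',k',\tau')$ and the unchanged threshold $\beta$.

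The remaining step is bookkeeping. Substituting into Theorem 1,
\begin{equation*}
\bar{\lambda}_{\epsilon}=\frac{\eta\,\rho'\,(\tau')^2}{(s')^2\beta^{2/\alpha}\,(k')^{1/\tau'}},\qquad \rho'=\frac{1}{(\tau')^2}\bigl(\epsilon(\tau'+1)\bigr)^{1/\tau'},
\end{equation*}
and then replacing $s'^2=s^2/v$, $\tau'=\tau/v$, $k'=k/v$, the factor of $v$ in the denominator of $(s')^2$ contributes one $v$, the rescaling of $k'$ contributes an additional $v^{v/\tau}$ from $(k/v)^{-v/\tau}$, and the $(\tau')^2$ cancellations leave the $\rho$ of the theorem statement. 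Collecting these gives $\bar{\lambda}_{\epsilon}=\Theta\bigl(\eta\,v^{v/\tau+1}\tau^2\,\rho\,k^{-v/\tau}/(s^2\beta^{2/\alpha})\bigr)$ with $\rho=\frac{1}{\tau^2}(\epsilon(\tau/v+1))^{v/\tau}$, matching the claimed $x=-v/\tau$ and $y=v/\tau+1$.

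The principal technical obstacle is the preservation of the PPP structure for the active-transmitter process in each time slot. The multihop protocol picks a relay based on the previous hop's successful decodings, so naively the relay location is correlated with the interference field and with which receivers succeeded, and Slivnyak's theorem cannot be applied directly. The resolution is to use the dual PCP/PPP correspondence from Appendix A so that conditional on the parent PPP $\Phi^{\textsf{t}}$, exactly one active transmitter per cluster per time slot yields, by superposition, another stationary PPP of the same intensity $\lambda_{\textsf{t}}$. The residual probability that some hop fails to find any successful relay is $O(v\epsilon)$ and is absorbed by the ``with high probability'' qualifier in the theorem statement, so the per-region application of Theorem 1 goes through and the scaling in \eqref{Eqn:MultihopMaxContenInten} follows.
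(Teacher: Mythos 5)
Your reduction-to-Theorem-1 strategy is a genuinely different route from the paper's: the paper re-derives the connected-receiver intensity for the multihop case from scratch, obtaining the bound $\mathbb{E}_R[\lambda_{\textsf{c}}(R,\tau)]\leq \lambda_{\textsf{r}}\bigl(1-\tfrac{1}{v}\sum_{i=1}^{v}\mathbb{E}_R[A(R)]^{i\tau/v}\bigr)$ and then repeating the steps of Theorem~\ref{Thm:MaxContenIntenNonCoop}, whereas you invoke Theorem~\ref{Thm:MaxContenIntenNonCoop} as a black box with the rescaled triple $(s/\sqrt{v},\,k/v,\,\tau/v)$. Your bookkeeping is correct and does reproduce $x=-v/\tau$, $y=v/\tau+1$ and the stated $\rho$, and your identification of the PPP-preservation issue for the relay process (resolved via Proposition~\ref{Prop:DualityPcpPpp}) matches how the paper handles it.

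However, there is a genuine gap in the outage accounting. Applying Theorem~\ref{Thm:MaxContenIntenNonCoop} to each of the $v$ sub-regions with the \emph{full} budget $\epsilon$ only guarantees, by the union bound, that the end-to-end multicast outage is at most $v\epsilon$ — not $\epsilon$ — and this cannot be ``absorbed by the with-high-probability qualifier,'' because fixing it naively by allocating $\epsilon/v$ per region replaces $\epsilon$ by $\epsilon/v$ inside $(\cdot)^{v/\tau}$ and drops the exponent of $v$ from $v/\tau+1$ to $1$, i.e., it changes the claimed scaling. The paper closes exactly this gap with an ingredient your proposal omits: because of the broadcast nature of each transmission and the filtration property of $\Phi^{\textsf{c}}$, the receivers in the $i$th region on the multicast path have accumulated $i\tau/v$ transmission attempts by the time the packet is delivered there, so their residual failure probability is $\mathbb{E}_R[A]^{i\tau/v}$ rather than $\mathbb{E}_R[A]^{\tau/v}$. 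The total outage is governed by $\sum_{i=1}^{v}\mathbb{E}_R[A]^{i\tau/v}$, which is dominated by its $i=1$ term, so the first region may consume essentially the entire budget $\epsilon$ while the remaining $v-1$ regions contribute only lower-order terms; this is what legitimizes the per-region constraint $\mathbb{E}_R[A]^{\tau/v}\lesssim \epsilon v/k$ that your substitution silently assumes. Without this accumulation argument your per-region application of Theorem~\ref{Thm:MaxContenIntenNonCoop} proves a weaker statement (outage $\leq v\epsilon$) whose correct repair yields a different value of $y$.
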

\begin{proof}
See Appendix \ref{App:ProofSectorMaxContenInten}.
\end{proof}

By comparing \eqref{Eqn:MaxContenIntenNoCoop} and \eqref{Eqn:MultihopMaxContenInten}, it is hard to see if the multihop multicast method achieves a larger $\bar{\lambda}_{\epsilon}$. Nevertheless, in the following subsection we will show that $\bar{\lambda}_{\epsilon}$ and its corresponding MTC indeed achieve a larger value by multihop multicast. The simulation result of the MTC with multihop multicast in a large network with Rayleigh fading and $\lambda_{\textsf{r}}=0.2$ is presented in Fig. \ref{Fig:MhMTCwRxCoop1}, and it is easy to observe that tessellating the cluster into a certain number of regions improves the MTC whereas too many tessellations degrades it. Fig. \ref{Fig:MhMTCwRxCoop2} shows the simulation of the MTC achieved by multihop multicast in a large dense network with Rayleigh fading. If we compare Fig. \ref{Fig:MhMTCwRxCoop2} with Fig. \ref{Fig:MhMTCwRxCoop1}, we can find that the curves in these two figures are not very much different. This point is quite different from the case of single-hop multicast (see Figs. \ref{Fig:ShMTCwoRxCoop1} and \ref{Fig:ShMTCwoRxCoop2}), and therefore, it reveals that multihop multicast is able to efficiently alleviate the impact on MTC due to denseness.  How to tessellate the cluster to achieve a larger MTC will be discussed in the following subsection. In addition, there exists another time-division multicast method to multicast in Fig. \ref{Fig:MultihopTransModel}. Namely, the typical transmitter multicasts a smaller region in each time slot. We can show that this time-division multicast method attains a less $\bar{\lambda}_{\epsilon}$ than the multihop multicast method. Since it just uses the same transmitter to multicast, the average distance from the transmitter to the intended receivers is longer than that of the multihop multicast method. So time-division multicast has a larger path loss so that the receivers have a lower SIR and thus less cluster transmissions are allowed.

\subsection{Capacity Gain Achieved by Multihop Multicast}\label{Sec:CapGainMultihopMulticast}
According to Theorem \ref{Thm:MultihopMulticastMaxInten}, the MTC with multihop multicast obtained from \eqref{Eqn:MultihopMaxContenInten} and $b=\Theta(\log(k)/\tau)$ can be concluded as follows
\begin{equation}\label{Eqn:MultihopMTC}
C_{\epsilon} = \Theta\left(\frac{\rho\,k^{x}\,v^{y}\,\log(k)}{s^2 \beta^{\frac{2}{\alpha}}}\right).
\end{equation}
Since just comparing \eqref{Eqn:MultihopMTC} with its single-hop counterpart \eqref{Eqn:SingleHopMTC} is hard to see if multihop multicast achieves a larger MTC or not, defining a capacity gain $g_c$ as follows can help us understand when multihop multicast is better than single-hop multicast.
\begin{eqnarray}\label{Eqn:CapaGainMTC}
 g_c(v) \defn 10\log_{10}\left[\frac{C_{\epsilon}\text{ in }\eqref{Eqn:MultihopMTC}}{C_{\epsilon}\text{ in }\eqref{Eqn:SingleHopMTC}}\right],\,\,(\text{dB}).
\end{eqnarray}
Since the capacity gain is dependent on $v$, we can formulate an optimization problem with constraints on $v$ as follows.
\begin{eqnarray}\label{Eqn:OptProbCapaGain}
\min_{v}\, -g_c(v),\quad  \text{subject to }\, 1-v\leq 0\,\text{ and }\,v-\tau \leq 0.
\end{eqnarray}
Hence, if there exists a minimizer $v$ such that the minimum of $-g_c$ is negative, then multihop multicast achieves a higher MTC than single-hop multicast. The problem in \eqref{Eqn:OptProbCapaGain} is a convex optimization problem as shown in the following theorem.
\begin{theorem}\label{Thm:OptProbCapacityGain}
Suppose a packet is delivered by the multihop multicast method and each cluster is tessellated into $v$ smaller multicast regions of equal area. \eqref{Eqn:OptProbCapaGain} is a convex optimization problem and thus there exists a unique feasible solution of $v$ to it.
\end{theorem}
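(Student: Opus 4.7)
The plan is to reduce the theorem to two facts: (a) the feasible set $\{v:1-v\le 0,\,v-\tau\le 0\}=[1,\tau]$ is convex, which is immediate; and (b) $-g_c(v)$ is strictly convex on $[1,\tau]$, which both makes the program convex and forces uniqueness of the minimizer. All the work is in (b). Since $g_c=10\log_{10}(\text{ratio})$ differs from $\log(\text{ratio})$ only by the positive scaling $10/\ln 10$, it suffices to prove strict concavity of
\[
F(v)\defn\log\!\frac{C_\epsilon\text{ in }\eqref{Eqn:MultihopMTC}}{C_\epsilon\text{ in }\eqref{Eqn:SingleHopMTC}}.
\]

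Substituting the $\rho$'s, $x$'s and $y$'s from Theorems \ref{Thm:MaxContenIntenNonCoop} and \ref{Thm:MultihopMulticastMaxInten}, the common $\log(k)/(s^2\beta^{2/\alpha})$ prefactor and the $1/\tau^2$ appearing inside both $\rho$'s cancel, leaving
\[
F(v)=\tfrac{v}{\tau}\log\!\bigl(\epsilon(\tau/v+1)\bigr)-\tfrac{1}{\tau}\log\!\bigl(\epsilon(\tau+1)\bigr)+\tfrac{1-v}{\tau}\log k+\Bigl(\tfrac{v}{\tau}+1\Bigr)\log v .
\]
Expanding $\log(\tau/v+1)=\log(\tau+v)-\log v$ and combining with $(v/\tau+1)\log v$, the two $(v/\tau)\log v$ contributions cancel, so
\[
F(v)=\tfrac{v}{\tau}\log(\tau+v)+\log v+(\text{affine in }v)+\text{const},
\]
and only the first two terms influence $F''$.

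A direct calculation then gives
\[
F''(v)=\frac{1}{\tau(\tau+v)}+\frac{1}{(\tau+v)^{2}}-\frac{1}{v^{2}}=-\frac{h(v)}{v^{2}\tau(\tau+v)^{2}},\quad h(v)\defn\tau(\tau+v)^{2}-v^{2}(2\tau+v)=\tau^{3}+2\tau^{2}v-\tau v^{2}-v^{3}.
\]
Hence strict concavity of $F$ on $[1,\tau]$ is equivalent to the polynomial positivity $h(v)>0$ on that interval. I would check this by unimodality: $h'(v)=2\tau^{2}-2\tau v-3v^{2}$ has a single positive root $v^{\star}=\tau(\sqrt 7-1)/3\in(0,\tau)$, with $h'>0$ on $(0,v^{\star})$ and $h'<0$ on $(v^{\star},\tau)$. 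So $h$ is unimodal on $[0,\tau]$, its minimum on that closed interval is attained at an endpoint, and $h(0)=h(\tau)=\tau^{3}>0$ gives $h(v)\ge\tau^{3}>0$ on $[0,\tau]\supset[1,\tau]$.

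Therefore $F''(v)<0$ throughout $[1,\tau]$, so $-g_c(v)$ is strictly convex on the compact convex set $[1,\tau]$. A strictly convex function on such a set attains its infimum at a unique point, which establishes the theorem. The main obstacle I foresee is the cancellation that reduces $F$ to the two nonlinear pieces $(v/\tau)\log(\tau+v)+\log v$ plus harmless affine and constant terms, together with the polynomial positivity check for $h$; once those two pieces of algebra are done, the rest is standard convex-analysis bookkeeping.
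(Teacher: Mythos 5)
Your proof is correct and follows the same strategy as the paper: expand $g_c$ explicitly from the ratio of \eqref{Eqn:MultihopMTC} to \eqref{Eqn:SingleHopMTC}, take two derivatives, and reduce strict concavity to an elementary algebraic inequality on $[1,\tau]$. Your expansion
\[
F(v)=\tfrac{v}{\tau}\log(\tau+v)+\log v+(\text{affine in }v)+\text{const}
\]
is the correct consequence of the two capacity expressions (the $\frac{v}{\tau}\log v$ cancellation checks out), and your reduction $F''<0 \iff h(v)=\tau(\tau+v)^2-v^2(2\tau+v)>0$, verified by unimodality of $h$ with $h(0)=h(\tau)=\tau^3$, is airtight. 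Note that your final inequality differs from the paper's printed condition $(1+\tau/v)\sqrt{(2\tau-v)/(2\tau+v)}>1$; this is because the paper's displayed formula for $g_c/10$ contains terms ($\frac{v}{\tau}\log_{10}(1+\frac{v}{\tau})$ rather than $\frac{v}{\tau}\log_{10}(1+\frac{\tau}{v})$, and exponent $\frac{v}{\tau}+2$ rather than $\frac{v}{\tau}+1$ on $v$) that do not follow from $\rho=\frac{1}{\tau^2}(\epsilon(\tau/v+1))^{v/\tau}$ and $y=\frac{v}{\tau}+1$ as stated in Theorem \ref{Thm:MultihopMulticastMaxInten}; your version is the faithful one, and both inequalities happen to hold on the feasible set, so the conclusion is unaffected. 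The only (shared) caveats are that $v$ is treated as a continuous variable and that the $\Theta(\cdot)$ scaling expressions are treated as exact in forming the ratio, exactly as the paper does.
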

\begin{proof}
We have to verify that the optimization problem in \eqref{Eqn:OptProbCapaGain} is convex and it has only one optimal solution of $v$. In other words, we have to show that $-g_c$ is strictly convex. Now considering the noncooperative receiver case, the capacity gain in \eqref{Eqn:CapaGainMTC} can be explicitly expressed as follows:
\begin{eqnarray*}
\frac{g_c}{10} &=& \left(\frac{1-v}{\tau}\right)\log_{10}\left(\frac{k}{\epsilon}\right)+\frac{v}{\tau}\log_{10}\left(1+\frac{v}{\tau}\right)
+\left(\frac{v}{\tau}+2\right)\log_{10}(v)-\frac{\log_{10}(1+\tau)}{\tau}\\
&=& z\,\log_{10}\left(\frac{\tau\,\epsilon}{k}\right)+z\,\log_{10}(1+z)+(2+z)\log_{10}(z)+
2\,\log_{10}(\tau)+\frac{1}{\tau}\log_{10}\left(\frac{k}{\epsilon(1+\tau)}\right),
\end{eqnarray*}
where $z=\frac{v}{\tau}$. Taking the derivative of the above equation with respect to $z$ twice and letting $\frac{\dif^2 g_c}{\dif z^2}<0$, it yields the inequality $(1+\tau/v)\sqrt{(2\tau-v)/(2\tau+v)}>1$,
which is always true for all $\tau\in\mathbb{N}_+$. Thus, $-g_c$ is strictly convex.
\end{proof}

Theorem \ref{Thm:OptProbCapacityGain} indicates that the proposed multihop multicast method can achieve a larger MTC than single-hop multicast since $g_c$ is strictly concave. The reason that multicasting in the smaller regions hop by hop can increase the MTC is because the average number of intended receivers in a tessellated region and path loss are largely reduced, and thus merely few retransmissions are able to make the multicast outage probability lower its designated upper bound. So the increase in $\bar{\lambda}_{\epsilon}$ is over the loss in spectral efficiency. For time-division multicast, no capacity gain is achieved by it since its capacity gain function is also negative convex.

\section{Conclusions}\label{Sec:Conclusion}
Multicast transmission in a wireless ad hoc network is modeled by a Poisson cluster process, where transmitters in the network follow a stationary PPP and each of them is associated with an area-fixed cluster in which the intended receivers are a another stationary PPP. The MTC is defined under the constraints on the multicast outage probability and the decoding delay. Three network conditions, dense, large, and large dense, are specified in order to attain scaling characterizations on MTC. The scaling behaviors of the single-hop and multihop MTCs under the three network conditions are presented by a general expression. They are affected significantly by the decoding delay but not by the fading channel models. In addition, for multihop multicast our main result shows that the MTC is superior to its single-hop counter part if all clusters are tessellated appropriately.

%----------------------------------Section of Appendices -------------------------------------------

%\section{Appendix}
%
\appendices
\section{Useful Propositions}
\subsection{Moment Generating Functional of Stationary Independent PPPs}\label{App:ProofLaplaceShotPPP}
\begin{proposition}\label{Prop:LapalceShotPPP}
Let $\mathcal{B}(0,r)$ be a circular disc centered at the origin with radius $r$ and $\Phi_i=\{(X_{ij},H_{ij}):X_{ij}\in\mathcal{B}(0,r)\cap\mathbb{R}^2, r\geq 1,j\in\mathbb{N}\}$ be a stationary marked PPP of intensity $\lambda_i$ for all $i\in[1,2,\cdots,L]$ and $\{\sqrt{H}_{ij}\}$ are i.i.d. Nakagami-$m$ random variables with unit mean and variance. Suppose $\Phi_i$ has a Poisson shot generating function $I_i:\mathbb{R}_+^2\times \mathbb{R}_+\rightarrow \mathbb{R}_+$ which is defined as $I_i\defn \sum_{X_i\in\Phi_i} H_{ij}\ell(|X_{ij}|)$ where $\alpha>2$. If $\{\Phi_i\}$ are independent, then the sum of the Poisson shot generating functions, i.e. $I=\sum_{i=1}^L I_i$, has the following moment generating functional:
\begin{eqnarray}
\mathcal{L}_I(\phi_1)&=&\mathbb{E}\left[e^{-\phi_1 I}\right]= \exp\left(-\pi\,\Delta_1(\phi_1,r)\,\sum^L_{i=1} \lambda_i\right),\,\, \forall \phi_1\in\mathbb{R}_{++}\label{LaplaceShotPPP1}\\
\mathcal{M}_I(\phi_2)&=&\mathbb{E}\left[e^{\phi_2 I}\right]= \exp\left(\pi\,\Delta_2(\phi_2,r)\,\sum^L_{i=1} \lambda_i\right),\,\, \forall \phi_2\in\left(0,m\,r^{\alpha}\right)\label{LaplaceShotPPP2}.
\end{eqnarray}
where
\begin{eqnarray}
\Delta_1\left(\phi_1,r\right)&\defn& \frac{2}{\alpha} \left(\frac{\phi_1}{m}\right)^{ \frac{2}{\alpha}}\sum_{j=0}^{m-1} {m\choose j}\int_{m/\phi_1}^{m r^{\alpha}/\phi_1} \frac{t^{j-1+ \frac{2}{\alpha}}}{(1+t)^m}\dif t, \label{Eqn:Delta1}\\
\Delta_2\left(\phi_2,r\right) &\defn&  \frac{2}{\alpha} \left(\frac{\phi_2}{m}\right)^{ \frac{2}{\alpha}}\sum_{j=0}^{m-1} (-1)^j{m\choose j} \int_{m/\phi_2}^{m r^{\alpha}/\phi_2}\frac{ t^{j+ \frac{2}{\alpha}}}{(1-t)^m}\,\dif t.\label{Eqn:Delta2}
\end{eqnarray}
\end{proposition}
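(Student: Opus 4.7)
The plan is to reduce the $L$-fold expectation to a one-point computation via independence of the $\{\Phi_i\}$, then use the probability generating functional (PGFL) of a marked PPP to convert the random sum into a Lebesgue integral over $\mathcal{B}(0,r)$, and finally apply an explicit substitution together with a binomial identity to obtain the closed-form sums $\Delta_1$ and $\Delta_2$. To start, since $\Phi_1,\dots,\Phi_L$ are independent, $I = \sum_i I_i$ is a sum of independent random variables, so $\mathcal{L}_I(\phi_1) = \prod_{i=1}^L \mathbb{E}[e^{-\phi_1 I_i}]$ and $\mathcal{M}_I(\phi_2) = \prod_{i=1}^L \mathbb{E}[e^{\phi_2 I_i}]$, and it suffices to compute one factor.

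For the Laplace factor, the marked PGFL of a PPP gives
\[
\mathbb{E}\left[e^{-\phi_1 I_i}\right] = \exp\left(-\lambda_i \int_{\mathcal{B}(0,r)} \left(1 - \mathbb{E}_H\left[e^{-\phi_1 H \ell(|x|)}\right]\right) dx\right),
\]
and the density \eqref{Eqn:NakagamiFadPDF} identifies $H$ as Gamma$(m,m)$ with Laplace transform $\mathbb{E}[e^{-sH}] = (1 + s/m)^{-m}$. Passing to polar coordinates and using $\ell \equiv 0$ on $\mathcal{B}(0,1)$, the integral collapses to $2\pi\int_1^r u\left[1 - (1 + \phi_1 u^{-\alpha}/m)^{-m}\right] du$. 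I would then substitute $t = m u^\alpha/\phi_1$, mapping $u \in [1,r]$ to $t \in [m/\phi_1, mr^\alpha/\phi_1]$ (exactly the limits in \eqref{Eqn:Delta1}), with $u\, du = \tfrac{1}{\alpha}(\phi_1/m)^{2/\alpha}\, t^{2/\alpha - 1}\, dt$. The bracket becomes $1 - (t/(1+t))^m = ((1+t)^m - t^m)/(1+t)^m$, and the binomial identity $(1+t)^m - t^m = \sum_{j=0}^{m-1}\binom{m}{j} t^j$ splits the integral into precisely the finite sum defining $\Delta_1$. Collecting constants and summing over $i$ produces the exponent $-\pi \Delta_1(\phi_1, r)\sum_i \lambda_i$, which is \eqref{LaplaceShotPPP1}.

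The MGF expression \eqref{LaplaceShotPPP2} follows by the same scheme with $-\phi_1 \mapsto +\phi_2$. The Gamma MGF $(1 - s/m)^{-m}$ replaces $(1 + s/m)^{-m}$, the same substitution carries the integration limits to $[m/\phi_2, mr^\alpha/\phi_2]$, and the bracket becomes $(t/(t-1))^m - 1 = (t^m - (t-1)^m)/(t-1)^m$. Rewriting $(t-1)^m = (-1)^m(1-t)^m$ and expanding $t^m - (t-1)^m$ via the binomial theorem, the $(-1)^m$ factors recombine to yield the alternating signs $(-1)^j$ and the $(1-t)^m$ denominator appearing in \eqref{Eqn:Delta2}.

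The main obstacle is purely bookkeeping in this last step: the combinatorial signs from expanding $(t-1)^m$ must cancel with the $(-1)^m$ extracted from $(t-1)^m = (-1)^m(1-t)^m$ in just the right way to leave the claimed $(-1)^j$ in \eqref{Eqn:Delta2}. One also has to justify the stated range on $\phi_2$ — the Gamma MGF requires $\phi_2 u^{-\alpha} < m$ pointwise on $[1,r]$, and Fubini then validates the PGFL identity on that range.
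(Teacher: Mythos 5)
Your proposal is correct and follows essentially the same route as the paper's proof: independence reduces $I$ to a single cluster, the Laplace functional of a marked PPP with the Gamma$(m,m)$ transform $(1+s/m)^{-m}$ gives the spatial integral, and the substitution $t=mu^{\alpha}/\phi_1$ with the binomial expansion of $(1+t)^m-t^m$ yields $\Delta_1$ (and analogously $\Delta_2$). Your version is in fact slightly more explicit than the paper's, which omits the change of variables and the sign bookkeeping you carry out for the MGF case.
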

\begin{proof}
Since the marks $\{H_{ij}\}$ for all points $\{X_{ij}\}$ in their corresponding PPP are i.i.d. Nakagami-$m$ random variables with unit mean and variance, their probability density function is
given in \eqref{Eqn:NakagamiFadPDF} and rewritten in below for convenience:
\begin{equation*}
f_H(h)=\frac{m^m}{\Gamma(m)}h^{m-1}\exp(-mh).
\end{equation*}
Thus the Laplace transform of $f_H(h)$ is
\begin{eqnarray*}
  \mathcal{L}_H(w) &=& \int_0^{\infty} e^{-wh} f_H(h)\, \dif h\\
  &=&  \frac{m^m}{\Gamma(m)}\int_0^{\infty} e^{-(w+m)h}h^{m-1}\, \dif h = \frac{1}{(1+w/m)^{m}}.
\end{eqnarray*}
Moreover, the Laplace transform for a Poisson shot process $\Phi_i$ with i.i.d. marks $\{H_{ij}\}$ is given by \cite{JFCK93}
\begin{equation}\label{Eqn:LapTrsfmPosShitNoise}
\mathcal{L}_{I_i}(\phi_1)=\exp\left\{-\lambda_i \int_{\mathcal{B}(0,r)}\left(1-\mathbb{E}_H\left[e^{-\phi_1 H \ell(|X|)}\right]\right)\dif X\right\},
\end{equation}
and we also know
\begin{equation}\label{Eqn:ExpFadingChannel}
\mathbb{E}_H\left[e^{-\phi_1 H \ell(|X|)}\right]=\int_{0}^{\infty} e^{-\phi_1 \ell(|X|) h} f_H(h)\,\dif h=\mathcal{L}_H\left(\phi_1 \ell(|X|)\right).
\end{equation}
Substituting \eqref{Eqn:ExpFadingChannel} into \eqref{Eqn:LapTrsfmPosShitNoise} and it follows that
\begin{eqnarray}
\mathcal{L}_{I_i}(\phi_1) &=& \exp\left(-\lambda_i\, \int_{\mathcal{B}(0,r)} \left[1-\left(1+\frac{\phi_1}{m}\ell(|X|)\right)^{-m}\right] \,\dif X\right)\nonumber \\
&=& \exp\left(-2\pi \lambda_i\,\left\{\int_{1}^{r} \left[1-\left(\frac{mx^{\alpha}}{m\,x^{\alpha}+\phi_1}\right)^m\right]x \, \dif x\right\} \right)\nonumber\\
%&=& \exp\left(-\lambda_i\, v_u\left[c_1+d \sum_{k=0}^{m-1}{m\choose k}\left(\frac{m}{\phi_1}\right)^k \int_{1}^{\infty} \frac{x^{\alpha k+d-1}}{\left(1+mx^{\alpha}/\phi_1\right)^m}\, \dif x \right]\right)\nonumber\\
&=& \exp\left(-\pi\lambda_i\, \left[ \frac{2}{\alpha} \left(\frac{\phi_1}{m}\right)^{\frac{2}{\alpha}}\sum_{j=0}^{m-1} {m\choose j}
 \int_{m/\phi_1}^{mr^{\alpha}/\phi_1} \frac{t^{j-1+\frac{2}{\alpha}}}{(1+t)^m}\, \dif t\right] \right)\nonumber\\
 &=&\exp(-\pi\lambda_i\,\Delta_1(\phi_1,r))\label{Eqn:Laplacefun1}.
\end{eqnarray}
Since all the PPPs are independent, it yields the following desired result:
$$\mathcal{L}_{I}(\phi_1)=\prod_{i=1}^L \mathbb{E}\left[e^{-\phi_1 I_i}\right]=\exp\left(-\pi\Delta_1(\phi_1,r)\sum_{i=1}^L\lambda_i\right).$$
Now consider the case of $\mathcal{M}_I(\phi_2)$. Similar to $\mathcal{L}_{I_i}(\phi_1)$, $\mathcal{M}_{I_i}(\phi_2)$ can be written as
\begin{equation}
\mathcal{M}_{I_i}(\phi_2)=\exp\left\{\lambda_i\int_{\mathcal{B}(0,r)}\left(1-\mathbb{E}_H\left[e^{\phi_2 H \ell(|Y|)}\right]\right)\dif Y\right\}.
\end{equation}
We also know $\mathcal{M}_H(w)=\int_0^{\infty} e^{wh}f_H(h)\,\dif h=\frac{1}{(1-w/m)^m}$ if $w\in(0,m)$ and $\mathbb{E}_H[\exp(\phi_2 H \ell(|Y|))]=\mathcal{M}_H(\phi_2\ell(|Y|))$ because $\phi_2\in(0,m\,r^{\alpha})$. Then following the same steps in \eqref{Eqn:Laplacefun1}, we can show that
\begin{eqnarray}
\mathcal{M}_{I_i}(\phi_2)&=&\exp\left(\pi\lambda_i\left[\frac{2}{\alpha} \left(\frac{\phi_2}{m}\right)^{\frac{2}{\alpha}}\sum_{j=0}^m {m\choose j}(-1)^j \int_{m/\Phi_2}^{m r^{\alpha}/\Phi_2}\frac{t^{j+\frac{2}{\alpha}}}{(1-t)^m}\dif t\right]\right)\nonumber\\
&=& \exp\left(\pi\lambda_i\,\Delta_2(\phi_2,r)\right).
\end{eqnarray}
Therefore, $\mathcal{M}_I(\phi_2)=\exp\left(\pi\,\Delta_2(\phi_2,r)\sum_{i=1}^L \lambda_i\right)$.
\end{proof}

\subsection{ The Duality between PCP and PPP}\label{App:DualityPcpPpp}
\begin{proposition}\label{Prop:DualityPcpPpp}
A stationary Poisson cluster process (PCP) can be constructed by a given stationary Poisson point process (PPP). Similarly, for a given Poisson cluster process (PCP) a Poisson point process can be constructed from it as well.
\end{proposition}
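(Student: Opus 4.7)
The plan is to give an explicit construction in each direction and invoke two standard Poisson invariance properties: the Neyman--Scott construction (a stationary parent PPP together with i.i.d.\ daughter PPPs yields a stationary PCP), and the displacement theorem (an i.i.d.\ random translation of a stationary PPP remains a stationary PPP). Both constructions match the PCP already used throughout Section \ref{Sec:MTCmodelPrelims}.

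\textbf{Forward direction (PPP $\to$ PCP).} Starting from a stationary parent PPP $\Phi^{\textsf{t}}$ of intensity $\lambda_{\textsf{t}}$, I would attach to each $X_i\in\Phi^{\textsf{t}}$ an independent homogeneous PPP $\Phi^{\textsf{r}}_i$ of intensity $\lambda_{\textsf{r}}$ on the translated cluster $\mathcal{R}_i=X_i+\mathcal{R}_0$. Conditional on $\Phi^{\textsf{t}}$ the $\mathcal{Z}_i\defn \Phi^{\textsf{r}}_i\cup\{X_i\}$ are mutually independent, so the superposition $\Phi=\bigcup_i\mathcal{Z}_i$ is by definition a PCP of intensity $\lambda=k\lambda_{\textsf{t}}$. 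Stationarity is inherited from the stationarity of $\Phi^{\textsf{t}}$ and the translation invariance of the daughter law, which is exactly the construction the paper uses.

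\textbf{Reverse direction (PCP $\to$ PPP).} Two constructions are available. The trivial one is the projection onto the parent layer: $\{X_i\}$ is by definition of the PCP a stationary PPP of intensity $\lambda_{\textsf{t}}$. The more substantive construction, and the one actually needed by the multihop scheme of Section \ref{Sec:MultihopMulticast}, is to pick one daughter per cluster uniformly and independently across clusters. Writing the chosen point as $Y_i=X_i+U_i$ with $\{U_i\}$ i.i.d.\ uniform on $\mathcal{R}_0$, the displacement theorem gives that $\{Y_i\}$ is a stationary PPP; after Bernoulli-thinning away the (independent) empty clusters, its intensity is $\lambda_{\textsf{t}}\bigl(1-e^{-\pi s^2\lambda_{\textsf{r}}}\bigr)$, which is again PPP by independent thinning.

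\textbf{Main obstacle.} The non-obvious ingredient lies in the reverse direction: one must ensure the selection rule preserves the Poisson property. The decisive point is that the displacements $\{U_i\}$ be i.i.d.\ and independent of the parent process; any coupled rule (for example, picking the daughter closest to a fixed global reference) would in general destroy the Poisson property of the output. With the independent uniform selection used above, the displacement theorem applies directly, and the only residual bookkeeping is the Bernoulli thinning that handles empty clusters; both operations are classical PPP-preserving operations, so the duality follows.
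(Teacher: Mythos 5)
Your proof is correct, and the forward direction coincides with the paper's (both just invoke the Neyman--Scott construction / definition of a PCP). The reverse direction, however, takes a genuinely different route. The paper proves it from first principles: it writes down the void probability $\mathbb{P}[\mathcal{A}(\tilde{\Phi}^{\textsf{d}})=0]$ of the selected-point process, conditions on a large region $\mathcal{C}$ containing all clusters, and lets $\mu(\mathcal{C})\to\infty$ to show the capacity functional equals $1-\exp(-\lambda_{\textsf{d}}\mu(\mathcal{A}))$, i.e.\ that of a stationary PPP of the parent intensity. You instead factor the selection into two classical PPP-preserving operations --- an independent thinning (discarding empty clusters) followed by an i.i.d.\ displacement $Y_i=X_i+U_i$ with $U_i$ uniform on $\mathcal{R}_0$ --- and invoke the displacement theorem. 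Your route is more modular and, importantly, it makes explicit the empty-cluster issue that the paper's argument glosses over: the paper tacitly assumes one point is selected from every cluster and so reports output intensity $\lambda_{\textsf{d}}$, whereas your version correctly yields $\lambda_{\textsf{t}}(1-e^{-\pi s^2\lambda_{\textsf{r}}})$ when the selection is restricted to daughters, which is the situation actually arising in the multihop scheme where the relay must be a receiver that decoded the packet. One small presentational caveat: the non-emptiness indicator and the displacement $U_i$ are both functions of the same daughter process $\Phi^{\textsf{r}}_i$, so they are not mutually independent; you should package them as a single independent per-point transition kernel (mark each $X_i$ independently, then map to either nothing or $X_i+U_i$) rather than as two separately applied independent operations. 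With that rephrasing the marking and mapping theorems apply verbatim and your argument is complete.
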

\begin{proof}
The first statement is based on the definition of a PCP\cite{DSWKJM96}. Consider a stationary PPP $\Phi^{\textsf{d}}$ of intensity $\lambda_{\textsf{d}}$ is given. Then we replace each point $X_j\in\Phi^{\textsf{d}}$ with a random finite set of points $Z_{X_j}$ which is called the cluster associated with point $X_j$. Then the superposition of all clusters yields the stationary PCP $\mathcal{Z}^{\textsf{d}}=\bigcup_{X_j\in\Phi^{\textsf{d}}} Z_{X_j}$. Suppose now a stationary PCP $\mathcal{Z}^{\textsf{d}}$ is given. In the following we are going to show that a PPP can be constructed by randomly selecting a point in each cluster. Let $\mathcal{A}$ be a bounded Borel set, $\Phi^{\textsf{d}}$ be the PPP formed by all parent points in $\mathcal{Z}^{\textsf{d}}$, and $\tilde{\Phi}^{\textsf{d}}$ consists of the points randomly selected from all clusters of $\mathcal{Z}^{\textsf{d}}$. The capacity functional of $\tilde{\Phi}^{\textsf{d}}$ is defined as follows \cite{DSWKJM96}:
\begin{equation}
\tilde{T}_{\textsf{d}}(\mathcal{A}) \defn \mathbb{P}\left[\mathcal{A}(\tilde{\Phi}^{\textsf{d}})>0\right],
\end{equation}
where $\mathcal{A}(\tilde{\Phi}^{\textsf{d}})$ denotes the numbers of points of $\tilde{\Phi}^{\textsf{d}}$ in a bounded set $\mathcal{A}\subset\mathbb{R}^2$. Since the capacity functional of a point process completely characterizes its distribution, $\tilde{\Phi}^{\textsf{d}}$ is a stationary PPP if we can show
\begin{equation*}
\tilde{T}_{\textsf{d}}(\mathcal{A})=\mathbb{P}[\mathcal{A}(\Phi^{\textsf{d}})>0]=1-\exp(-\lambda_{\textsf{d}}\,\mu(\mathcal{A})),
\end{equation*}
where $\mu(\mathcal{A})$ denotes the Lebesgue measure of set $\mathcal{A}$.

Consider a large bounded Borel set $\mathcal{C}$ such that $\mathcal{A}\subset \mathcal{C}$ and $\mathcal{C}$ encloses all clusters of $\mathcal{Z}^{\textsf{d}}$. So $\mathcal{C}(\Phi^{\textsf{d}})$ is a Poisson random variable of parameter $\lambda_{\textsf{d}}\mu(\mathcal{C})$. Since $\tilde{\Phi}^{\textsf{d}}$ is formed by the points that randomly selected by all points in $\Phi^{\textsf{d}}$, the probability that all points in $\tilde{\Phi}^{\textsf{d}}$ are not in $\mathcal{A}$ is
\begin{eqnarray*}
\mathbb{P}\left[\mathcal{A}(\tilde{\Phi}^{\textsf{d}})=0\right]&=&\mathbb{P}\left[\mathcal{A}(\tilde{\Phi}^{\textsf{d}})=0\big|\mathcal{C}(\tilde{\Phi}^{\textsf{d}})=0\right]
\mathbb{P}\left[\mathcal{C}(\tilde{\Phi}^{\textsf{d}})=0\right]\\&&+\mathbb{P}\left[\mathcal{A}(\tilde{\Phi}^{\textsf{d}})=0\big|\mathcal{C}(\tilde{\Phi}^{\textsf{d}})\neq 0\right]
\mathbb{P}\left[\mathcal{C}(\tilde{\Phi}^{\textsf{d}})\neq 0\right].
\end{eqnarray*}
Since $\tilde{\Phi}^{\textsf{d}}$ is generated from $\Phi^{\textsf{d}}$ and $\mathcal{C}$ covers all clusters of $\mathcal{Z}^{\textsf{d}}$, $\mathbb{P}\left[\mathcal{C}(\tilde{\Phi}^{\textsf{d}})=0\right]=\mathbb{P}[\mathcal{C}(\Phi^{\textsf{d}})=0]$ and thus it follows that
\begin{eqnarray*}
\mathbb{P}\left[\mathcal{A}(\tilde{\Phi}^{\textsf{d}})=0\right]&=& 1\cdot \mathbb{P}[\mathcal{C}(\Phi^{\textsf{d}})=0]+\mathbb{P}\left[\mathcal{A}(\tilde{\Phi}^{\textsf{d}})=0 \big|\mathcal{C}(\Phi^{\textsf{d}})\neq 0\right](1-\mathbb{P}[\mathcal{C}(\Phi^{\textsf{d}})=0]) \\
&=& \exp(-\lambda_{\textsf{d}} \mu(\mathcal{C}))+\mathbb{P}\left[\mathcal{A}(\tilde{\Phi}^{\textsf{d}})=0\big|\mathcal{C}(\Phi^{\textsf{d}})\neq 0\right](1-\exp(-\lambda_{\textsf{d}} \mu(\mathcal{C}))).
\end{eqnarray*}
Now letting $\mu(\mathcal{C})\rightarrow \infty$ (i.e. considering an infinitely large network), we can have $$\mathbb{P}\left[\mathcal{A}(\tilde{\Phi}^{\textsf{d}})=0|\mathcal{C}(\Phi^{\textsf{d}})\neq 0\right]=\lim_{\mu(\mathcal{C})\rightarrow \infty}\left(1-\frac{\lambda_{\textsf{d}}\,\mu(\mathcal{A})}{\lambda_{\textsf{d}}\,\mu(\mathcal{C})}\right)^{\lambda_{\textsf{d}}\mu(\mathcal{C})}=\exp(-\lambda_{\textsf{d}} \mu(\mathcal{A})).$$ Therefore, $\tilde{T}_{\textsf{d}}(\mathcal{A})=1-\exp(-\lambda_{\textsf{d}}\mu(\mathcal{A}))$ and $\tilde{\Phi}^{\textsf{d}}$ is a stationary PPP.
\end{proof}

\section{Proofs of Lemmas and Theorems}
\subsection{Proof of Lemma \ref{Lem:NonCoopConnRxIntenNakaFading}}\label{App:ProofNonCoopConnRxInten}
The Laplace functional of the stationary PPP $\Phi$ for a nonnegative function $g:\mathbb{R}^2\rightarrow\mathbb{R}_+$ is defined and shown as follows \cite{FBBB09}:
\begin{equation*}
 \mathcal{\tilde{L}}_{\Phi}(g)\defn \mathbb{E}\left[e^{-\int_{\mathbb{R}^2}g(X)\,\Phi(\dif X)}\right]=\exp\left(-\int_{\mathbb{R}^2}\left(1-e^{-g(X)}\right)\lambda_{\textsf{r}}\,\mu(\dif X)\right).
\end{equation*}
Since the Laplace functional completely characterizes the distribution of the point process, we can find the intensity of $\Phi^{\textsf{c}}$ by looking for $\mathcal{\tilde{L}}_{\Phi^{\textsf{c}}}(g)$. Recall that $\Phi^{\textsf{c}}=\left\{Y_j\in\Phi^{\textsf{r}}_0: \max_{t\in [1,\cdots,\tau]} H_j(t) \ell(|Y_j|) \geq \beta I_0\right\}$ and $\{H_j(t)\}$ are i.i.d. $\forall t\in [1,2,\cdots,\tau]$. Let $\mathds{1}_\mathcal{A}(x)$ be an indicator function which is equal to 1 if $x\in \mathcal{A}$ and 0, otherwise. The Laplace functional of $\Phi^{\textsf{c}}$ for $g(Y)=\tilde{g}(Y)\mathds{1}_{\Phi^{\textsf{c}}}(Y)$ is given by
\begin{eqnarray*}
\mathcal{\tilde{L}}_{\Phi^{\textsf{c}}}(g)&=& e^{-\pi s^2\lambda_{\textsf{r}}} \sum_{i=0}^{\infty} \frac{\lambda^i_{\textsf{r}}}{i!}\int_{\mathcal{R}_0}\cdots\int_{\mathcal{R}_0} \prod_{j=1}^i \left(e^{-g(Y_j)}\mathbb{P}[Y_j\in\Phi^{\textsf{c}}]+\mathbb{P}[Y_j\notin\Phi^{\textsf{c}}]\right)\mu(\dif Y_1)\cdots \mu(\dif Y_i)\\
&=& e^{-k} \sum_{i=0}^{\infty} \frac{1}{i!}\left(\int_{\mathcal{R}_0}\left(e^{-g(Y)}\mathbb{P}[Y\in\Phi^{\textsf{c}}]+1-\mathbb{P}[Y\in\Phi^{\textsf{c}}]\right)\lambda_{\textsf{r}}\,\mu(\dif Y)\right)^i\\
&=& \exp\left(-\int_{\mathcal{R}_0}\left(1-e^{-g(Y)}\right)\mathbb{P}[Y\in\Phi^{\textsf{c}}]\lambda_{\textsf{r}}\,\mu(\dif Y)\right).
\end{eqnarray*}
Also, for all $r\in[1,s]$ and considering $\lambda_{\textsf{t}}=\Theta(\epsilon)$, we have
\begin{eqnarray}
\mathbb{P}[Y\in\Phi^{\textsf{c}}] &=& \mathbb{P}\left[H_j(t) \ell(|Y|)\geq \beta I_0(t), t=1,\ldots,\tau\right]\nonumber\\
&\stackrel{(a)}{=}& 1-(\mathbb{P}[H\ell(|Y|)<\beta I_0])^{\tau}\nonumber\\
 &=& 1-\left(\mathbb{E}_{I}[F_H(\beta \, r^{\alpha}\,I_0|I_0)]\right)^{\tau}.\label{Eqn:ConnProbNtrans1}
\end{eqnarray}
where $(a)$ follows from the fact that the temporal correlation of interference can be neglected for small $\lambda_{\textsf{t}}$\cite{RKGMH09}. Thus, we have
\begin{equation*}
\mathcal{\tilde{L}}_{\Phi^{\textsf{c}}}(g)= \exp\left(-\int_1^s \left(1-e^{-g(r)}\right)\lambda_{\textsf{c}}(r,\tau)\,\mu(\dif r)\right),
\end{equation*}
where $\lambda_{\textsf{c}}(r,\tau)=\lambda_{\textsf{r}}(1-\left(\mathbb{E}_{I}[F_H(\beta \, r^{\alpha}I_0|I_0)]\right)^{\tau})$. From the above result we know that $\Phi^{\textsf{c}}\subseteq \Phi_0^{\textsf{r}}$ is a nonhomogeneous PPP because its intensity $\lambda_{\textsf{c}}(r,\tau)$ is the intensity $\lambda_{\textsf{r}}$ of $\Phi_0^{\textsf{r}}$ scaled by \eqref{Eqn:ConnProbNtrans1}.

First consider $\sqrt{H}$ is Rayleigh fading. We can have the following:
\begin{eqnarray*}
\mathbb{E}_{I}\left[F_{H}(\beta r^{\alpha}I_0)|I_0)\right] &=& \int_{\mathbb{R}_+} F_H(\beta r^{\alpha}\omega)f_{I_0}(\omega)\, \dif \omega \\
&=& 1-\int_0^{\infty}e^{-\beta r^{\alpha}w}f_{I_0}(w)\,\dif w \stackrel{(b)}{=} 1-\mathcal{L}_{I_0}(\beta r^{\alpha}),
\end{eqnarray*}
where $(b)$ follows from the result of Proposition \ref{Prop:LapalceShotPPP} in Appendix \ref{App:ProofLaplaceShotPPP}. So using \eqref{LaplaceShotPPP1} and \eqref{Eqn:Delta1}, it follows that
\begin{equation*}
\lambda_{c}(r,\tau) = \lambda_{\textsf{r}}\left\{1-\left[1-\exp\left(-\pi \Delta_1(\beta r^{\alpha},\infty) \lambda_{\textsf{t}}\right)\right]^{\tau}\right\}.
\end{equation*}
Now if $\sqrt{H}$ is a Nakagami-\emph{m} random variable, then we can have the following:
\begin{eqnarray*}
\mathbb{E}_{I}\left[F_{H}(\beta r^{\alpha}I_0)|I_0)\right] &=& \int_{\mathbb{R}_+} F_H(\beta r^{\alpha}\omega)f_{I_0}(\omega)\, \dif \omega\\
&=& 1-\int_{0}^{\infty}\frac{\Gamma(m,m\beta r^{\alpha}\omega)}{\Gamma(m)}f_{I_0}(\omega)\,\dif \omega,
\end{eqnarray*}
where $\Gamma(y,x)=\int_{x}^{\infty} t^{y-1}e^{-t}\,\dif t$ is the incomplete Gamma function and $\Gamma(m,0)=\Gamma(m)=(m-1)!$. Also, if $f_W(w)$ is a probability density function of random variable $W$ then we can have the following result for $m>1$:
\begin{eqnarray}
\int_{0}^{\infty} \Gamma(m,a w)\,f_W(w)\,\dif w &=& \int_{0}^{\infty}\int_{a w}^{\infty} t^{m-1}e^{-t}\, f_W(w)\,\dif t\, \dif w \nonumber\\
&=& (-1)^{m-1}\,a^m\,\frac{\dif ^{m-1}}{\dif a^{m-1}}\left(\frac{\mathcal{L}_{W}(a)}{a}\right).
\end{eqnarray}
According to Proposition \ref{Prop:LapalceShotPPP} in Appendix \ref{App:ProofLaplaceShotPPP}, we can have $\mathcal{L}_{I_0}(\phi)$ with $\Delta_1(\phi,\infty)$. Thus,
\begin{eqnarray}\label{Eqn:SuccProbNakaFading1}
\mathbb{E}_{I}\left[F_{H}(\beta r^{\alpha}I_0|I_0))\right] &=& 1+\frac{(-\phi)^m}{\Gamma(m)}\frac{\dif^{m-1}}{\dif \phi^{m-1}}\left(\frac{\mathcal{L}_{I_0}(\phi)}{\phi}\right)\bigg|_{\phi=m\beta r^{\alpha}}\nonumber \\
&=& 1-\Psi^{(m-1)}(m\beta r^{\alpha}).
\end{eqnarray}
Substituting \eqref{Eqn:SuccProbNakaFading1} into \eqref{Eqn:ConnProbNtrans1}, then the result in \eqref{Eqn:NonCoopConnIntenNakaFading} can be arrived.

\subsection{Proof of Theorem \ref{Thm:MaxContenIntenNonCoop}}\label{App:ProofMaxContenIntenNonCoop}
Here we only provide the proof for Nakagami-$m$ faidn with $m>1$ since the proof for Rayleigh fading is similar. According to the outage probability \eqref{Eqn:DefOutProbMTC2} upper bounded by $\epsilon$, we know
$$\mathbb{E}_{R}[\lambda_{\textsf{c}}(R,\tau)]\geq \lambda_{\textsf{r}}+\frac{\log(1-\epsilon)}{\pi s^2}=\lambda_{\textsf{r}}\left(1-\frac{\epsilon}{k}\right)+\Theta(\epsilon^2),$$
for sufficiently small $\epsilon$. Using the upper bound in \eqref{Eqn:AvgConnIntenNakagami} and the lower bound on $\mathbb{E}_{R}[\lambda_{\textsf{c}}(R,\tau)]$ obtained in above, it follows that
\begin{equation}\label{Eqn:AvgConnectProb1}
\mathbb{E}_{R}\left[(1-\Psi^{(m-1)}(m\beta\,R^{\alpha}))\right]^{\tau}\leq \frac{\epsilon}{k}.
\end{equation}
Therefore, we further have
\begin{equation}
\mathbb{E}_R[\Psi^{(m-1)}(m\beta R^{\alpha})]\geq 1-\sqrt[\tau]{\frac{\epsilon}{k}}.
\end{equation}

Note that $\mathbb{E}_{R}\left[\Psi^{(m-1)}(m\beta\,R^{\alpha})\right]$ is upper bounded by one and thus it approaches to unity when $\epsilon/k$ is sufficiently small such that $\sqrt[\tau]{\epsilon/k}\leq \epsilon$. That means $\Psi^{(m-1)}(m\beta\,R^{\alpha})$ is very close to 1 almost surely and thus $\lambda_{\textsf{t}}=\Theta(\epsilon)$. If $k$ is sufficiently large, then we have $\exp(-\pi\,\lambda_{\textsf{t}}\Delta_1)=1-\pi\,\lambda_{\textsf{t}}\Delta_1+\Theta(\epsilon^2)$. Substituting this expression into \eqref{Eqn:PsiMphi}, $\Psi^{(m-1)}(m\beta R^{\alpha})$ can be reduced to $1- \pi\,\lambda_{\textsf{t}}\,\Delta_1\,\prod_{j=1}^{m-1}(1-2/j\alpha)+\Theta(\epsilon^2)$.
Define $\hat{\Delta}_1(\phi)\defn [\Delta_1(\phi,\infty)-\Delta_1(\phi,\hat{a})]\prod_{j=1}^{m-1}(1-2/j\alpha)$. Choosing $0\leq\hat{a}\leq 1$, we have
\begin{eqnarray*}
\mathbb{E}_{R}\left[(1-\Psi^{(m-1)}(m\beta\,R^{\alpha}))^{\tau}\right] &\leq& \left[\hat{\Delta}_1(\beta)\,\pi\,\lambda_{\textsf{t}}\right]^{\tau}\mathbb{E}[R^{2\tau}]+\Theta(\epsilon^2)\\
&=& \frac{\left[\pi s^2\hat{\Delta}_1(\beta)\,\lambda_{\textsf{t}}\right]^{\tau}}{(\tau+1)}+ \Theta(\epsilon^2).
\end{eqnarray*}
According to \eqref{Eqn:AvgConnectProb1} and the above result, these two upper bounds should coincide when the maximum contention intensity $\bar{\lambda}_{\textsf{t}}$ is reached. Namely, \eqref{Eqn:MaxContenIntenNoCoop} is obtained from the following:
\begin{equation}\label{Eqn:MaxIntensityDenseNwks}
\bar{\lambda}_{\textsf{t}} = \frac{\sqrt[\tau]{\epsilon\,(\tau+1)}}{\pi s^2\,\beta^{\frac{2}{\alpha}}\,\sqrt[\tau]{k}\,\hat{\Delta}_1(\beta)}+\Theta(\epsilon^2) \Rightarrow \bar{\lambda}_{\epsilon} = \eta \cdot \frac{\sqrt[\tau]{\epsilon\,(\tau+1)}}{\pi s^2\,\beta^{\frac{2}{\alpha}}\,\sqrt[\tau]{k}},
\end{equation}
where $\eta \defn 1/\hat{\Delta}_1(\beta)$. This completes the proof.

\subsection{Proof of Lemma \ref{Lem:NonCoopBoundsMCrate}}\label{App:ProofBoundsMCrate}
According to Campbell's theorem, we know $\mathbb{E}[I_0]=\frac{2\pi \lambda_{\textsf{t}}}{\alpha-2}$ and then the lower bound on $b$ can be reduced as follows.
\begin{eqnarray*}
b \stackrel{(a)}{\geq} \mathbb{E}\left[\log \left(1+\frac{H_{\max}}{s^{\alpha}\mathbb{E}[I_0]}\right)\right]
  &\stackrel{(b)}{\geq}& \mathbb{E}\left[\log \left(1+\frac{H_{\max}(\alpha-2)}{2\pi s^2\lambda_{\textsf{t}}}\right)\right]\\
  &\geq& \log\left(1+\frac{1}{\pi s^2\lambda_{\textsf{t}}}\right)F^{\texttt{c}}_{H_{\max}}\left(\frac{2}{\alpha-2}\right),
\end{eqnarray*}
where $(a)$ follows from Jensen's inequality since $\log(1+a/x)$ is convex for $x>0$ and constant $a>0$, $(b)$ is due to $s^{\alpha-2}\geq 1$, and $F^{\texttt{c}}_{H_{\max}}$ is the CCDF of random variable $H_{\max}$. In addition, we also know $b$ in \eqref{Eqn:MaxBcRateWoTimeDiv} is upper bounded as follows.
\begin{equation*}
b \leq \log \left(\mathbb{E}[I_0]+\mathbb{E}[H_{\max}]s^{-\alpha}\right)+\mathbb{E}\left[\log\left(1/I_0\right)\right].
\end{equation*}
The upper bound is obtained by first conditioning on $I_0$ and then using Jensen's inequality. We know the term $\log(1/I_0)$ in the upper bound is convex and thus $\mathbb{E}[\log(1/I_0)]\geq \log(1/\mathbb{E}[I_0])$. Since the network is interference-limited, $\mathbb{E}[I_0]$ must be bounded above zero. Thus there exists a $\gamma_1>0$ such that $\log(\gamma_1) \leq \mathbb{E}[\log(I_0)] \leq \log(\mathbb{E}[I_0])$, which means $\mathbb{E}[\log(I_0)]\geq \log(\gamma_2\,\mathbb{E}[I_0])$ for any $\gamma_2\in(0,\gamma_1/\mathbb{E}[I_0]]$. So the upper bound can be simplified as follows:
\begin{eqnarray*}
\log \left(\mathbb{E}[I_0]+\mathbb{E}[H_{\max}]s^{-\alpha}\right)+\mathbb{E}\left[\log\left(\frac{1}{I_0}\right)\right]
&\leq& \log\left(\frac{\mathbb{E}[I_0]}{\mathbb{E}[H_{\max}]}+\mathbb{E}[R^{-\alpha}]\right)
+\log\left(\frac{\mathbb{E}[H_{\max}]}{\gamma_2\mathbb{E}[I_0]}\right)\\ &\leq&\log\left(1+\frac{1}{\pi s^2\lambda_{\textsf{t}}}\right)+\log\left(\frac{\mathbb{E}[H_{\max}]}{\gamma_2}\right)
\end{eqnarray*}
because $s^{-\alpha}\leq \mathbb{E}[R^{-\alpha}]\leq\frac{2}{s^2(\alpha-2)}$ and $\mathbb{E}[H_{\max}]\geq 1$. Note that the CDF of $H_{\max}$, $F_{H_{\max}}$, is equal to $(F_H)^{\tau}$ since $\{H_j(t), t=1,\ldots,\tau\}$ are i.i.d. random variables. The proof is complete.

\subsection{Proof of Theorem \ref{Thm:MultihopMulticastMaxInten}}\label{App:ProofSectorMaxContenInten}
We start with the derivation of the upper bound on $\mathbb{E}[\lambda_{\textsf{c}}]$ as follows.
\textbf{(i)} \textbf{The upper bound on} $\mathbb{E}[\lambda_{\textsf{c}}]$: Suppose the multicast cluster $\mathcal{R}_0$ is tessellated into $v$ smaller multicast regions of equal area $\pi s^2/v$. Let $\mathcal{T}$ denote the set of all the pathes from the source multicast region (\ie the tessellated region in which the central transmitter is located) to the last multicast region. Let $T_*\in\mathcal{T}$ be the chosen path for multicasting a packet and $\mathcal{S}_i$ denote the $i$th tessellated region on $T_*$. Since there are $v$ multicast regions on $T_*$, the duration of a time slot for multicasting a packet in each region is at most $\tau/v$ transmission attempts. In order to implement multihop multicast, the multicast outage event should be redefined here as follows. We say there exists a multihop multicast outage in a multicast time slot if any of the receivers in the current multicast region does not receive the packet or no receivers in the next multicast region receive the packet after $\tau/v$ transmission attempts. In addition, $T_*$ is chosen according to the following criteria:
\begin{equation}\label{Eqn:OptMultiPath}
 T_* = \arg \min_{T\in\mathcal{T}} \left[1-\prod_{i=1}^{v} \left(1-\mathbb{P}[\mathcal{E}_i|T]\right)\right],
\end{equation}
where $\mathcal{E}_i$ is the multicast outage event happening in $\mathcal{S}_i$. So $T_*$ is the path with the minimum probability of end-to-end multicast outage.

If $\Phi_{s_i}$ is the receiver-connected process in $\mathcal{S}_i$ after $\tau$ attempts, then $\mathcal{E}_i$ can be expressed as $\{\Phi_{s_i}\subset(\Phi_r\cap\mathcal{S}_i)\cup (\tilde{Y}_{i+1}\in \mathcal{S}_{i+1})\}$ where $\tilde{Y}_{i+1}$ is any intended receiver in $\mathcal{S}_{i+1}$ and it could be the transmitter in the $(i+1)$th time slot. Due to the broadcast nature of wireless communication, the intended receivers in the regions other than $\mathcal{S}_i$ could also receive the packet which is merely multicasted to $\mathcal{S}_i$ in the $i$th time slot. Thus, if the average intensity of $\Phi_{s_i}$ at the end of the $t$th time slot is $\mathbb{E}_X[\lambda_{s_i}(X,t\tau/v)]$ where $X\in\mathcal{R}_0$, then we know $\mathbb{E}_X[\lambda_{s_1}(X,t\tau/v)]\leq \mathbb{E}_X[\lambda_{s_2}(X,t\tau/v)]\leq \cdots \leq \mathbb{E}_X[\lambda_{s_{v}}(X,t\tau/v)]$ since $\Phi_{s_i}$ is a filtration process.

Since the Lebesgue measure of $\mathcal{S}_i$ is $\frac{\pi s^2}{v}$, the probability of $E_i$ on $T_*$ is given by
\begin{eqnarray*}
 \mathbb{P}[\mathcal{E}_i|T_*] &=& 1-\exp\left(-\int_{\mathcal{S}_i}(\lambda_{\textsf{r}}-\lambda_{s_i}(X,i\tau/v))\, \mu(\dif X)\right)\\
 &&\cdot \left[1-\exp\left(-\frac{\pi s^2\mathbb{E}_X[\lambda_{s_{i+1}}(X,i\tau/v)]}{v}\right)\right]\leq \epsilon_v,
\end{eqnarray*}
where $\epsilon_v=1-(1-\epsilon)^{1/v}$ is the upper bound constraint on the multicast outage probability in each tessellated region. Considering sufficiently large $k$, the above equation can be simplified as
\begin{equation*}
\mathbb{P}[\mathcal{E}_i|T_*]=1-\exp\left(-\int_{\mathcal{S}_i}(\lambda_{\textsf{r}}-\lambda_{s_i}(X,i\tau/v))\, \mu(\dif X)\right)+\Theta(\epsilon^2).
\end{equation*}
The multicast outage event in $\mathcal{R}_0$ is the union of all the multihop multicast outage events, \ie $\mathcal{E}=\bigcup_{i=1}^{v}\mathcal{E}_i=(\bigcap_{i=1}^{v}\mathcal{E}_i^\texttt{c})^\texttt{c}$. Since all multihop multicast outage events are independent, the probability of $\mathcal{E}$ can be explicitly expressed as
\begin{eqnarray*}
\mathbb{P}[\mathcal{E}|T_*] &=& 1-\prod_{i=1}^{v}\mathbb{P}\left[\mathcal{E}_i^\texttt{c}|T_*\right]\\
&=& 1-\exp\left\{-\pi s^2\lambda_{\textsf{r}}+\sum_{i=1}^{v}\int_{\mathcal{S}_i}
\lambda_{s_i}(X,i\tau/v)\, \mu(\dif X)\right\}+\Theta(\epsilon^2)\leq \epsilon.
\end{eqnarray*}
According to \eqref{Eqn:DefOutProbMTC2} and the above equation, it follows that
\begin{equation*}
 \mathbb{E}_X[\lambda_{\textsf{c}}(X,\tau)]=\frac{1}{v}\sum_{i=1}^{v}\int_{\mathcal{S}_i}
\lambda_{s_i}(X,i\tau/v)\,\frac{ \mu(\dif X)}{\pi s^2/v}=\frac{1}{v} \sum_{i=1}^{v}\mathbb{E}_{X}[\lambda_{s_i}(X,i\tau/v)].
\end{equation*}
From the results in Lemma \ref{Lem:BoundsAveConnInten}, we can know $\mathbb{E}_X[\lambda_{s_i}(X,i\tau/v)]$ with Rayleigh and Nakagami-$m$ fading at the $i$th time slot. Let $A(r)=1-\exp(-\pi\Delta_1(\beta r^{\alpha},\infty)\lambda_{\textsf{t}})$ for Rayleigh fading or $A(r)=1-\Psi^{(m-1)}(m\beta r^{\alpha})$ for Nakagami-$m$ fading with $m>1$, and thus the upper bound on $\mathbb{E}_R[\lambda_{\textsf{c}}(R,\tau)]$ can be shown as
\begin{eqnarray*}
 \mathbb{E}_R[\lambda_{\textsf{c}}(R,\tau)]\leq \lambda_{\textsf{r}}\left(1-\frac{1}{v}\sum_{i=1}^{v}\mathbb{E}_R\left[A(R)\right]^{\frac{i\tau}{v}}\right).
\end{eqnarray*}

\textbf{(ii)} \textbf{The scaling law of $\bar{\lambda}_{\epsilon}$}: Using the upper bound on $\mathbb{E}_R[\lambda_{\textsf{c}}(R,\tau)]$ obtained From \eqref{Eqn:DefOutProbMTC2} and the above result, it yields the following inequality:
\begin{equation*}
    \frac{\epsilon\,v}{k}\geq \sum_{i=1}^{v}\mathbb{E}_R[A(R)]^{\frac{i\tau}{v}}=\frac{\mathbb{E}_R[A]^{\frac{\tau}{v}}(1-\mathbb{E}_R[A]^{\tau})}{1-\mathbb{E}_R[A]^{\frac{\tau}{v}}}.
\end{equation*}
Also, we know $\epsilon_v=1-\sqrt[v]{1-\epsilon}=\frac{\epsilon}{v}+\Theta(\epsilon^2)$ and $\mathbb{E}_R[A]^{\frac{\tau}{v}}\leq \frac{\epsilon_v\,v}{k}=\frac{\epsilon}{k}+\Theta(\epsilon^2)$.
Thus, we know
\begin{equation*}
    1-\left(\frac{\epsilon\,v}{k}\right)^{\frac{v}{\tau}}\leq \mathbb{E}_R [\Psi^{(m-1)}(m\beta R^{\alpha})].
\end{equation*}
 $\mathbb{E}_R [\Psi^{(m-1)}(m\beta R^{\alpha})]\approx 1$ since $\mathbb{E}_R [\Psi^{(m-1)}(m\beta R^{\alpha})]\leq 1$ and $k\geq \frac{v}{\epsilon^{\tau/v-1}}$. So the random variable $\Psi^{(m-1)}(m\beta R^{\alpha})$ is very close to one almost surely and thus $\lambda_{\textsf{t}}$ is $\Theta(\epsilon)$. Then following the same steps in the proof of Theorem \ref{Thm:MaxContenIntenNonCoop}, we can show that
\begin{equation}
    \bar{\lambda}_{\epsilon} = \eta\cdot\frac{v^{\frac{v}{\tau}+1}\,\tau^2\,\rho}{\pi s^2 \beta^{\frac{2}{\alpha}}k^{\frac{v}{\tau}}}=\Theta\left(\frac{v^{\frac{v}{\tau}+1}\,\tau^2\,\rho}{\pi s^2 \beta^{\frac{2}{\alpha}}k^{\frac{v}{\tau}}}\right).
\end{equation}
Thus, \eqref{Eqn:MultihopMaxContenInten} is arrived, which completes the proof.

% section of references
\bibliographystyle{ieeetran}
\bibliography{IEEEabrv,Ref_MultiCastTC}

% section of figures and tables
\newpage

\begin{figure}[!h]
  \centering
  \includegraphics[scale=0.75]{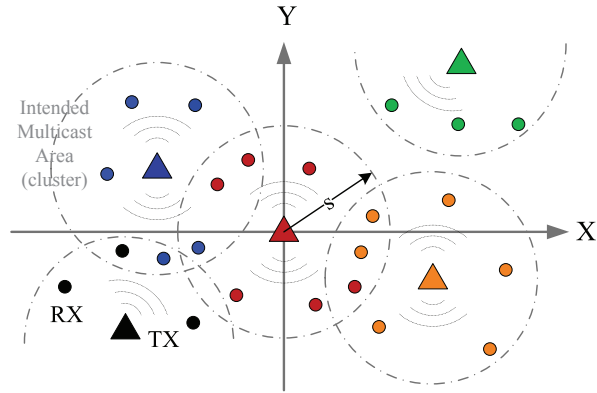}\\
  \caption{Multicast transmission model in a planar wireless ad hoc network: The transmitters in the network form a stationary PPP of intensity $\lambda_{\textsf{t}}$. Each transmitter (triangle) has an intended multicast area of radius $s$, where all the intended receivers (small circles) also form a stationary PPP of intensity $\lambda_{\textsf{r}}$. A transmitter and its corresponding intended receivers are indicated by the same color in a cluster. So each cluster could contain other transmitters and unintended receivers in addition to its own transmitter and intended receivers.}
  \label{Fig:MulticastModel}
\end{figure}

%%%%%%%%%%%%%%%%%%%%%%%%%%%%%%%%%%%%%%%%%%%%%%%%%

\begin{figure}[!h]
\centering
  \includegraphics[scale=0.65]{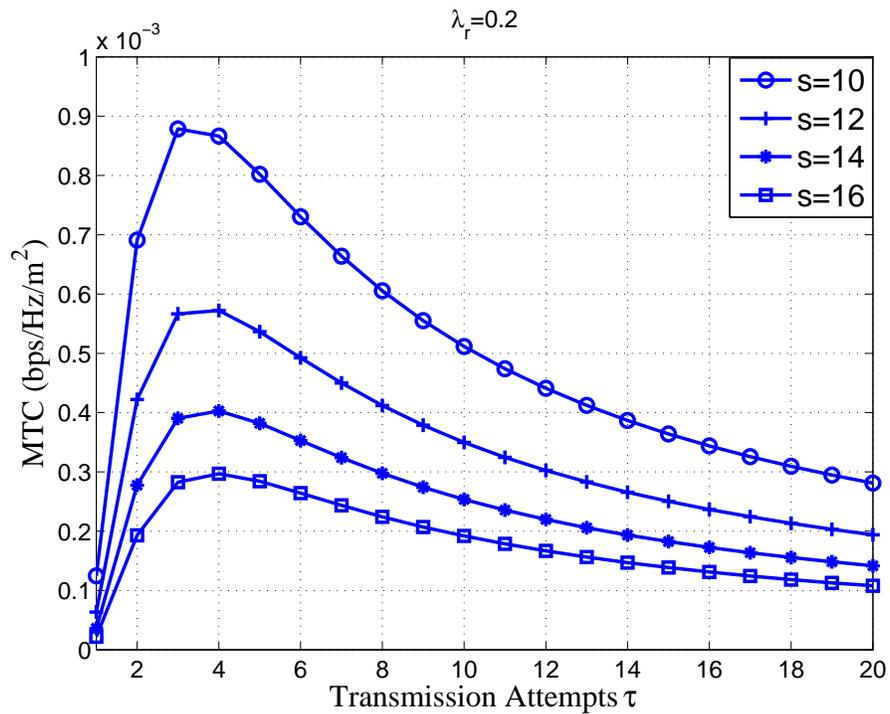}\\
  \caption{The simulated MTC of a large network with Rayleigh fading for $\epsilon=0.1$, $\beta=2$, $\alpha=3$ and $\lambda_r=0.1$.}\label{Fig:ShMTCwoRxCoop1}
\end{figure}

%%%%%%%%%%%%%%%%%%%%%%%%%%%%%%%%%%%%%%%%%%%%%%%%%

\begin{figure}[!h]
\centering
  \includegraphics[scale=0.65]{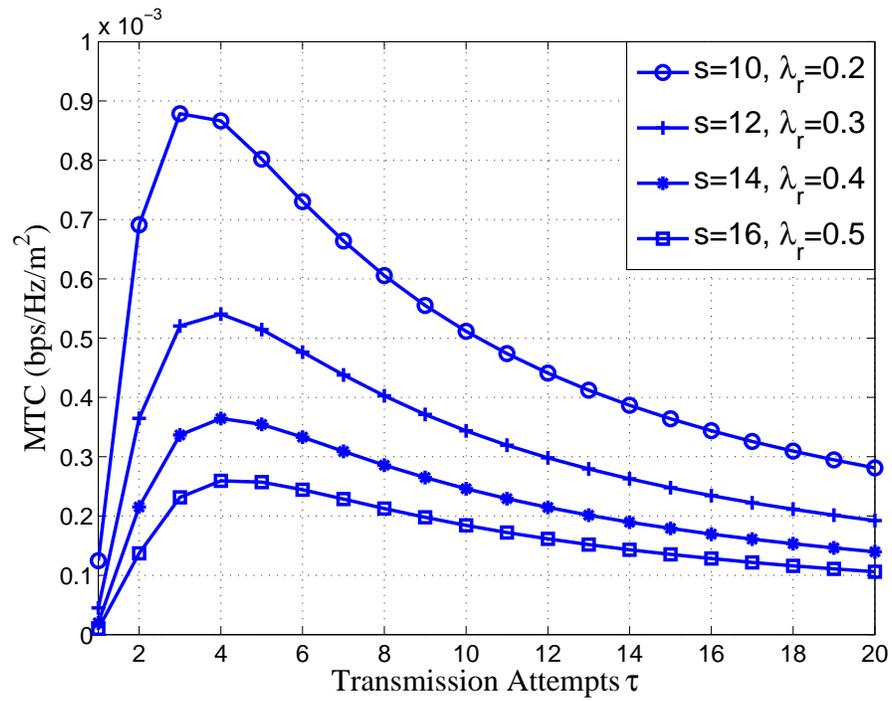}\\
  \caption{The simulated MTC of a large dense network with Rayleigh fading for $\epsilon=0.1$, $\beta=2$ and $\alpha=3$.}\label{Fig:ShMTCwoRxCoop2}
\end{figure}

%%%%%%%%%%%%%%%%%%%%%%%%%%%%%%%%%%%%%%%%%%%%%%%%%

%\begin{figure}[!h]
%\centering
%  \includegraphics[scale=0.55]{}\\
%  \caption{The simulated MTC of a large dense network with receiver cooperation for $\epsilon=0.1$, $\beta=2$, $\alpha=3$, and $m=1$.}\label{Fig:ShMTCwRxCoop}
%\end{figure}

%%%%%%%%%%%%%%%%%%%%%%%%%%%%%%%%%%%%%%%%%%%%%%%%

\begin{figure}[!h]
  \centering
  \includegraphics[scale=0.75]{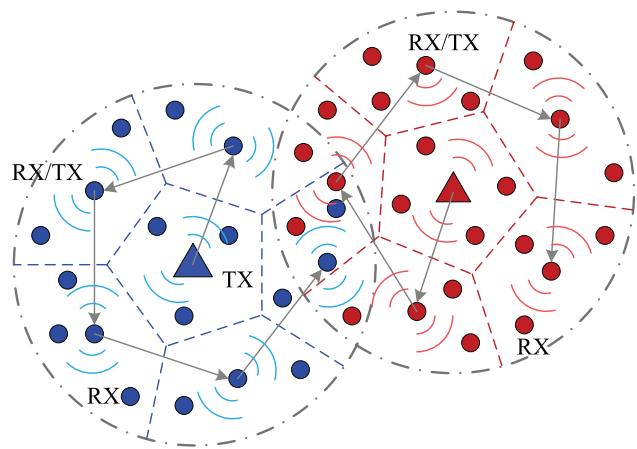}\\
  \caption{Each multicast cluster in a planar network is tesselated into 6 smaller multicast regions of equal area. The arrows in each cluster show an example path of delivering a packet over the tessellated regions by the multihop multicast method. }
  \label{Fig:MultihopTransModel}
\end{figure}

%%%%%%%%%%%%%%%%%%%%%%%%%%%%%%%%%%%%%%%%%%%%%%%%%%%

\begin{figure}[!h]
\centering
  \includegraphics[scale=0.6]{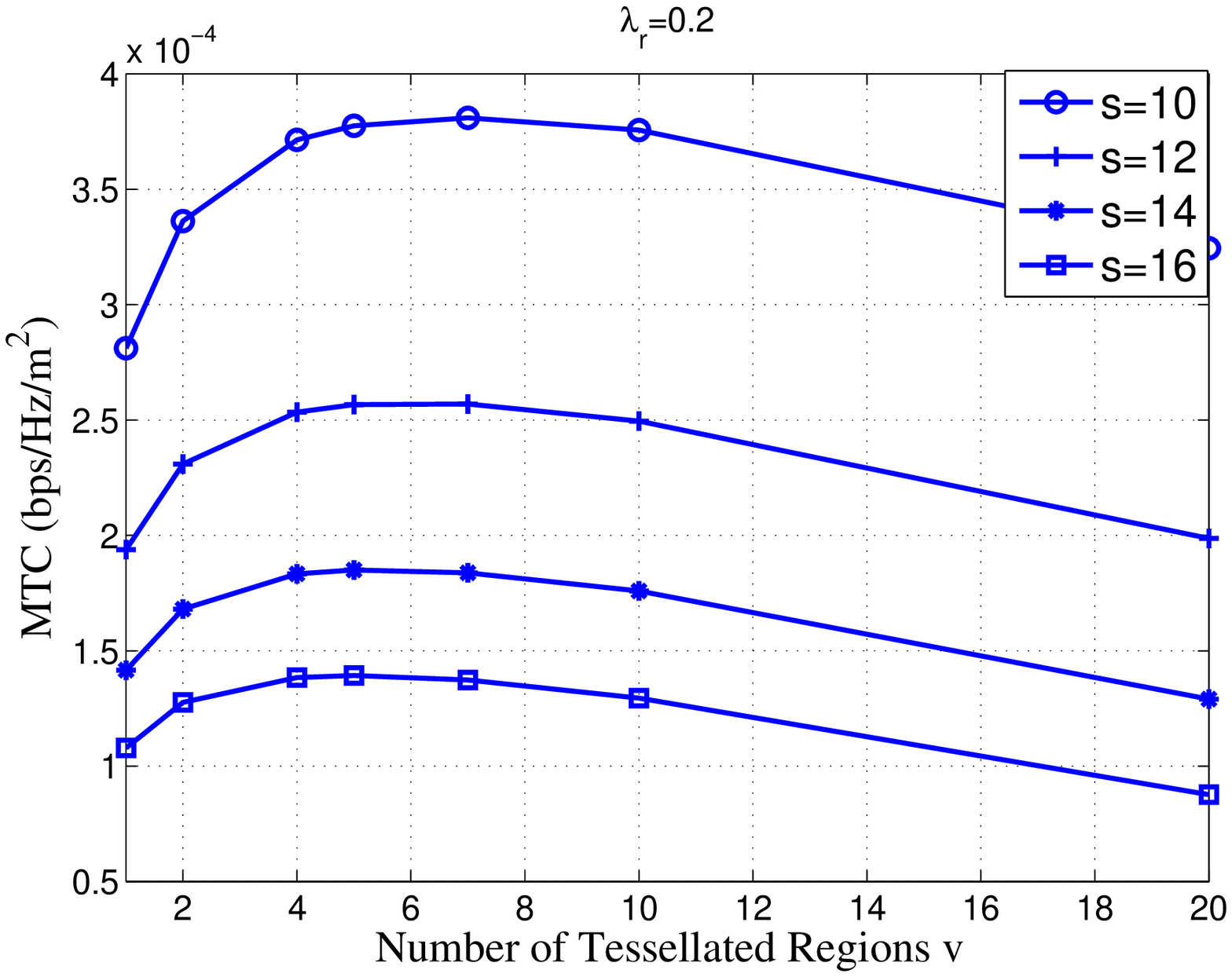}\\
  \caption{The simulated MTC achieved by multihop multicast in a large network with Rayleigh fading for $\epsilon=0.1$, $\beta=2$, $\alpha=3$, $\tau=20$ and $\lambda_{\textsf{r}}=0.2$.}\label{Fig:MhMTCwRxCoop1}
\end{figure}

%%%%%%%%%%%%%%%%%%%%%%%%%%%%%%%%%%%%%%%%%%%%%%%%%%%

\begin{figure}[!h]
\centering
  \includegraphics[scale=0.6]{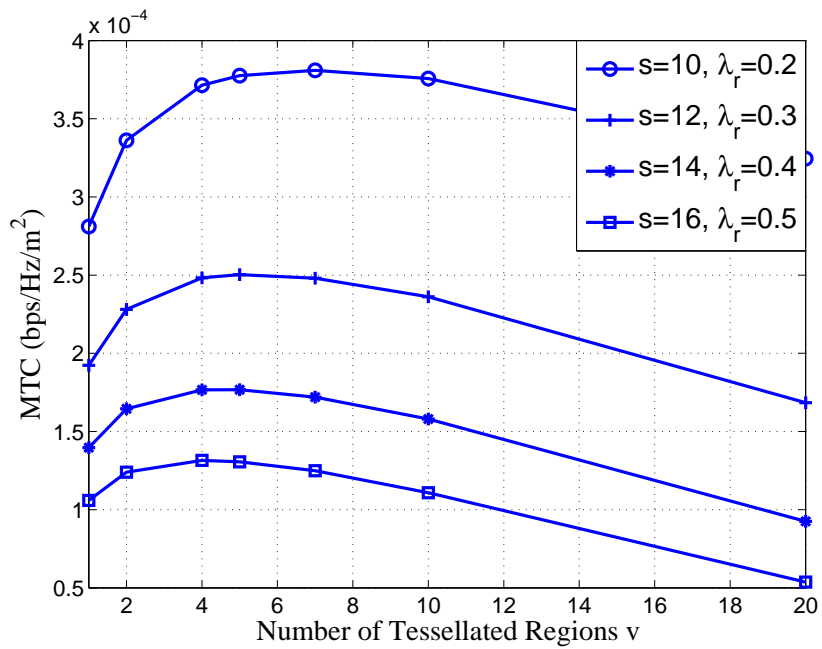}\\
  \caption{The simulated MTC achieved by multihop multicast in a large dense network with Rayleigh fading for $\epsilon=0.1$, $\beta=2$, $\alpha=3$ and $\tau=20$.}\label{Fig:MhMTCwRxCoop2}
\end{figure}

\end{document}